\def\<{\langle}
\def\>{\rangle}
\def\dg{\dagger}
\def\tr{\mathrm{Tr}}
\def\e{\text{e}}
\def\M{\mathcal{M}}
\def\im{\imath}
\def\C{\mathbb{C}}
\def\id{\mathcal{I}}
\def\one{{\mathchoice {\mathrm{1\mskip-4mu l}} {\mathrm{1\mskip-4mu l}} %
{\mathrm{1\mskip-4.5mu l}} {\mathrm{1\mskip-5mu l}}}}
\def\red#1#2{\mathcal{R}_{#1}^{#2}}
\def\ket#1{|#1\rangle}
\def\proj#1#2{|#1\rangle\langle#2|}
\def\inner#1#2{\langle#1|#2\rangle}
\def\m#1{\mathbb{M}_{#1}}
\def\ew#1{\mathcal{W}_{#1}}
\newcommand{\ot}{{\,\otimes\,}}
\newcommand{{\spa}}{{structural physical approximation\ }}
\newcommand{{\cp}}{{completely positive\ }}
\newtheorem{thm}{Theorem}
\newtheorem{defn}{Definition}
\newtheorem{example}{Example}
\newtheorem{lem}{Lemma}
\newtheorem{prop}{Proposition}
\begin{document}

\date{}
\title{\textbf{New tools for investigating \\ positive maps in matrix algebras}}

\author{Justyna Pytel Zwolak$^1$ and Dariusz  Chru\'sci\'nski$^2$ \\
$^1$ Department of Physics, Oregon State University,\\
Corvallis, OR 97331, USA \\
$^2$ Institute of Physics, Nicolaus Copernicus University,\\
Grudzi\c{a}dzka 5/7, 87--100 Toru\'n, Poland}

\maketitle

\begin{abstract}
We provide a novel tool which may be used to construct new examples of positive maps in matrix algebras (or, equivalently, entanglement witnesses). 
It turns out that this can be used to prove positivity of several well known maps (such as reduction map, generalized reduction, Robertson map, and many others). Furthermore, we use it to construct a new family of linear maps and prove that they are positive, indecomposable and (nd)optimal.
\end{abstract}

\section{Introduction}

Entanglement is one of the essential features of quantum physics and is fundamental to future quantum technologies. Therefore, there is a tremendous interest in developing efficient theoretical and experimental methods to detect entanglement. Linear positive maps in matrix algebras \cite{Paulsen,Bhatia} provide a basic tool to discriminate between separable and entangled states of composed quantum systems \cite{HHHH,Guhne}. A quantum state represented by the density
operator $\rho$ in $\mathcal{S}(\mathcal{H}_A \ot \mathcal{H}_B)$ is separable if and
only if it can be represented as
\begin{equation}\label{}
    \rho = \sum_\alpha p_\alpha \rho^{(A)}_\alpha \ot
    \rho^{(B)}_\alpha\ ,
\end{equation}
where $p_\alpha$ denotes a probability distribution, and $\rho^{(A)}_\alpha$ and $\rho^{(B)}_\alpha$ are density operators of
subsystems A and B, respectively. It is clear that separable states
define a convex subset in the space of all density operators in
$\mathcal{S}(\mathcal{H}_A \ot \mathcal{H}_B)$. States which are not separable
are called entangled. It is well known that $\rho$ represents a separable state if and only if \cite{EW1}
\begin{equation}\label{}
    (\id_A \ot \Lambda)\rho \geq 0 \ ,
\end{equation}
for all linear positive maps $\Lambda : \mathcal{B}(\mathcal{H}_B) \rightarrow \mathcal{B}(\mathcal{H}_A)$, where $\id_A : \mathcal{B}(\mathcal{H}_A) \rightarrow \mathcal{B}(\mathcal{H}_A)$ denotes an identity map, i.e., $\id_A(X) = X$ for each $X \in \mathcal{B}(\mathcal{H}_A)$  and $\mathcal{B}(\mathcal{H})$ denotes a $\mathbb{C}^*$-algebra of bounded operators in $\mathcal{H}$. Throughout the paper all Hilbert spaces are finite dimensional and hence $\mathcal{B}(\mathcal{H})$ may be treated as a matrix algebra $\m{N}(\mathbb{C})\equiv\m{N}$, where ${\rm dim}\, \mathcal{H} =N$.  Due to the well known duality \cite{Paulsen,CJ} between linear maps $\Phi : \mathcal{B}(\mathcal{H}_B) \rightarrow \mathcal{B}(\mathcal{H}_A)$ and linear operators in $\mathcal{H}_A \ot \mathcal{H}_B$ one may equivalently formulate the separability problem in terms of entanglement witnesses \cite{EW1,EW2}. A Hermitian operator $\ew{}$ defined on the tensor product
$\mathcal{H}_A \ot \mathcal{H}_B$ is called an entanglement witness if and only if:
1) $\mbox{Tr}(\ew{}\sigma_{\rm sep})\geq 0$ for all separable states
$\sigma_{\rm sep}$, and 2) there exists an entangled state $\rho$
such that $\mbox{Tr}(\ew{}\rho)<0$ (one says that $\rho$ is detected by
$\ew{}$).

Let $\Lambda : \mathcal{B}(\mathcal{H}_A) \rightarrow
\mathcal{B}(\mathcal{H}_B)$ be a positive linear map.  One calls $\Lambda$ $k$-positive if the map
\begin{equation}\label{}
    \id_k \ot \Lambda : \m{k} \ot \mathcal{B}(\mathcal{H}_A) \longrightarrow
\m{k} \ot \mathcal{B}(\mathcal{H}_B)
\end{equation}
is positive. In the above formula $\m{k}$ denotes a linear space of $k \times k$ complex matrices. A positive map which is $k$-positive for each $k$ is called completely positive
(CP). Actually, if  ${\rm dim}\mathcal{H}_A =d_A$ and ${\rm dim}\mathcal{H}_B
=d_B$ then $\Lambda$ is CP iff it is $d$-positive with $d = \min\{d_A,d_B\}$.
Denoting by $\mathcal{P}_k$ a convex cone of $k$-positive map one has a natural chain of inclusions
$$   \mathcal{P}_d \subset \mathcal{P}_{d-1} \subset \ldots \subset \mathcal{P}_2 \subset \mathcal{P}_1 \ , $$
where $\mathcal{P}_1$ is a convex cone of positive maps and $\mathcal{P}_d$ a convex cone of CP maps $(d=\min\{d_A,d_B\})$. Recall, that $\Lambda$ is {\em co--positive} iff $\Lambda \circ T$ is positive, where $T$ denotes transposition with respect to a fixed basis. Similarly, $\Lambda$ is $k$-copositive iff $\Lambda \circ T$ is $k$-positive. Denoting by $\mathcal{P}^k$ a convex cone of $k$-copositive map one has a dual chain of inclusions
$$   \mathcal{P}^d \subset \mathcal{P}^{d-1} \subset \ldots \subset \mathcal{P}^2 \subset \mathcal{P}^1 \ , $$
where $\mathcal{P}^1$ and $\mathcal{P}^d$ stands for copositive and completely copositive maps, respectively.

For the reader's convenience, let us recall some basic definitions that we shall use throughout the paper:

\begin{defn} A positive map $\Lambda$ is decomposable if $\Lambda \in \mathcal{P}_d \cup \mathcal{P}^d$, that is,
\begin{equation}\label{}
    \Lambda = \Lambda_1 + \Lambda_2 \circ T\ ,
\end{equation}
where $\Lambda_1$ and $\Lambda_2$ are CP and $T$ denotes
transposition in a given basis. Maps which are not decomposable are
called indecomposable (or nondecomposable).
\end{defn}


\begin{defn} A positive map $\Lambda$ is optimal if and only if for any completely positive map $\Phi_{CP}$, the map $\Lambda - \Phi_{CP}$ is no longer positive.
\end{defn}

\begin{defn} A positive map $\Lambda$ is nd-optimal if and only if for any decomposable map $\Phi_{D}$, the map $\Lambda - \Phi_{D}$ is no longer positive.
\end{defn}

\hspace{-0.55cm}Using the Jamio{\l}kowski isomorphism \cite{CJ} between operators and linear maps
one can extend the properties derived for positive maps to entanglement witnesses.

\begin{defn}
An entanglement witness $\ew{\Lambda}:=(\id_A\ot\Lambda)P^+_A$ is (in)decomposable, optimal and nd-optimal if and only if the corresponding positive map $\Lambda$ is (in)decomposable, optimal and nd-optimal.
\end{defn}

\hspace{-0.55cm}In particular one proves the two following interesting results concerning optimal and nd-optimal EW  \cite{Lew}.

\begin{thm} Let $\ew{}$ be an EW in $\mathcal{H}_A \ot \mathcal{H}_B$. If a set of  product vectors $\psi \ot \phi$ satisfying
\begin{equation}\label{}
    \< \psi \ot \phi |\ew{}|\psi\ot \phi\> = 0\ ,
\end{equation}
spans $\mathcal{H}_A \ot \mathcal{H}_B$, then $\ew{}$ is an optimal EW.
\end{thm}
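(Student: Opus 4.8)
The plan is to argue by contraposition and to work at the level of operators rather than maps, using the Jamio{\l}kowski isomorphism that underlies Definition 4. Suppose $\ew{}=\ew{\Lambda}$ is \emph{not} optimal. By Definition 2 this means there is a nonzero completely positive map $\Phi_{CP}$ such that $\Lambda-\Phi_{CP}$ is still positive. Translating through the isomorphism, the map $\Phi_{CP}$ corresponds to a nonzero positive operator $P:=\ew{\Phi_{CP}}\geq 0$, and positivity of $\Lambda-\Phi_{CP}$ is equivalent to block positivity of $\ew{}-P$, i.e. $\< \psi\ot\phi\,|\,(\ew{}-P)\,|\,\psi\ot\phi\>\geq 0$ for every product vector $\psi\ot\phi$. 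My goal is to show that the spanning hypothesis forces $P=0$, contradicting $\Phi_{CP}\neq 0$ and thereby establishing optimality.

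Next I would collect the product vectors that saturate the witness. Let $\mathcal{Z}=\{\,\psi\ot\phi : \< \psi\ot\phi\,|\,\ew{}\,|\,\psi\ot\phi\>=0\,\}$; by assumption $\mathcal{Z}$ spans $\mathcal{H}_A\ot\mathcal{H}_B$. Evaluating the block positivity of $\ew{}-P$ on an arbitrary $\psi\ot\phi\in\mathcal{Z}$ gives $0\leq \< \psi\ot\phi\,|\,(\ew{}-P)\,|\,\psi\ot\phi\> = -\,\< \psi\ot\phi\,|\,P\,|\,\psi\ot\phi\>$, so necessarily $\< \psi\ot\phi\,|\,P\,|\,\psi\ot\phi\>=0$. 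Here I would invoke the elementary fact that a positive semidefinite operator with vanishing expectation on a vector annihilates that vector: from $P\geq 0$ and $\< v|P|v\>=0$ one gets $P^{1/2}v=0$, hence $Pv=0$. Applying this to each $v=\psi\ot\phi\in\mathcal{Z}$ shows $\mathcal{Z}\subseteq\ker P$. Since $\mathcal{Z}$ spans the whole space, $\ker P=\mathcal{H}_A\ot\mathcal{H}_B$, i.e. $P=0$ — the desired contradiction. Consequently no nonzero positive operator may be subtracted from $\ew{}$ while preserving the witness property, so $\ew{}$ is optimal.

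The step I expect to demand the most care is the opening translation, not the linear-algebra core. One must verify precisely that ``subtracting a CP map $\Phi_{CP}$'' corresponds under the isomorphism of Definition 4 to subtracting its positive Choi operator $P=\ew{\Phi_{CP}}\geq 0$, and that positivity of $\Lambda-\Phi_{CP}$ is genuinely equivalent to nonnegativity of $\< \psi\ot\phi|(\ew{}-P)|\psi\ot\phi\>$ on \emph{all} product vectors (block positivity), which is exactly the defining condition 1) of an EW. One should also dispose of the scaling implicit in Definition 2: because positivity of $\Lambda-\Phi_{CP}$ is homogeneous in $\Phi_{CP}$, it suffices to treat a single nonzero $P$ (one may rescale $P\mapsto\epsilon P$ without affecting the kernel argument). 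Once this dictionary is in place, the remainder is the short spanning/kernel argument above.
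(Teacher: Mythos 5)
Your argument is correct, and it is essentially the standard proof of this result. Note that the paper itself does not prove this theorem at all: it is quoted as a known result from the reference [Lew] (Lewenstein, Kraus, Cirac, Horodecki), so there is no in-paper proof to compare against. Your contrapositive route --- translate non-optimality into the existence of a nonzero positive operator $P$ with $\ew{}-P$ block positive, deduce $\< \psi\ot\phi|P|\psi\ot\phi\>=0$ on the zero set, use $P\ge 0$ to conclude $P(\psi\ot\phi)=0$, and then kill $P$ by the spanning hypothesis --- is exactly the argument in that reference. The only points genuinely requiring care are the ones you already flag: that under the Choi--Jamio{\l}kowski isomorphism complete positivity of $\Phi_{CP}$ corresponds to $P=\ew{\Phi_{CP}}\ge 0$ and positivity of $\Lambda-\Phi_{CP}$ to block positivity of $\ew{}-P$ (the complex conjugation of $\psi$ that appears in this dictionary is harmless, since conjugation permutes the product vectors), and that the paper's Definition 2 implicitly quantifies over nonzero $\Phi_{CP}$. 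With those remarks in place the proof is complete.
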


\begin{thm}\label{thm:nd-optimal}
An entanglement witness $\ew{}$ is nd-optimal if and only if both $\ew{}$ and $\ew{}^\Gamma$ are optimal.
\end{thm}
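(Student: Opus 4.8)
The plan is to push the entire statement down to the level of operators via the Jamio{\l}kowski isomorphism and then argue purely with the relevant convex cones. Write $\mathcal{K}$ for the cone of block--positive operators on $\mathcal{H}_A\ot\mathcal{H}_B$ (the image of $\mathcal{P}_1$, i.e.\ the witnesses), $\mathcal{V}$ for the cone of positive semidefinite operators (the image of the CP cone $\mathcal{P}_d$), and
\[
\mathcal{D}=\{\,P+Q^\Gamma : P,Q\in\mathcal{V}\,\}=\mathcal{V}+\mathcal{V}^\Gamma
\]
for the cone of decomposable operators, where $\Gamma$ is partial transposition and $\mathcal{V}^\Gamma=\{Q^\Gamma:Q\in\mathcal{V}\}$ is the image of the co--CP cone $\mathcal{P}^d$. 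In this language the definition of optimality reads: $\ew{}$ is optimal iff $\ew{}-P\notin\mathcal{K}$ for every nonzero $P\in\mathcal{V}$; and nd--optimality reads: $\ew{}$ is nd--optimal iff $\ew{}-D\notin\mathcal{K}$ for every nonzero $D\in\mathcal{D}$.

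Before splitting into the two implications I would record three elementary facts. First, every positive operator is block positive, so $\mathcal{V}\subseteq\mathcal{K}$. Second, partial transposition maps block--positive operators to block--positive operators, since for all product vectors
\[
\<\psi\ot\phi|\ew{}^\Gamma|\psi\ot\phi\>=\<\psi\ot\bar\phi|\ew{}|\psi\ot\bar\phi\>\ ,
\]
and $\bar\phi$ ranges over all vectors as $\phi$ does; hence $\Gamma(\mathcal{K})=\mathcal{K}$, and together with the first fact this gives $\mathcal{V}^\Gamma\subseteq\mathcal{K}$ as well. Third, $\mathcal{K}$ is a convex cone, so it is closed under addition. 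Because $\Gamma$ is a linear involution preserving $\mathcal{K}$, applying it to $\ew{}^\Gamma-Q$ shows that optimality of $\ew{}^\Gamma$ is equivalent to the statement that $\ew{}-Q^\Gamma\notin\mathcal{K}$ for every nonzero $Q\in\mathcal{V}$; this reformulation in terms of $\ew{}$ is what makes the two notions comparable.

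The forward implication is then immediate: since $\mathcal{V}\subseteq\mathcal{D}$ and $\mathcal{V}^\Gamma\subseteq\mathcal{D}$, nd--optimality forbids subtracting any nonzero element of $\mathcal{V}$ or of $\mathcal{V}^\Gamma$ while staying in $\mathcal{K}$, which is exactly optimality of $\ew{}$ together with (via the reformulation above) optimality of $\ew{}^\Gamma$. For the converse I would argue by contradiction: assume $\ew{}$ and $\ew{}^\Gamma$ are optimal but $\ew{}$ fails to be nd--optimal, so that $\ew{}-P-Q^\Gamma\in\mathcal{K}$ for some $P,Q\in\mathcal{V}$ with $D=P+Q^\Gamma\neq 0$; then at least one of $P,Q$ is nonzero. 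If $P\neq 0$ I would write
\[
\ew{}-P=(\ew{}-P-Q^\Gamma)+Q^\Gamma\ ,
\]
a sum of an element of $\mathcal{K}$ and an element of $\mathcal{V}^\Gamma\subseteq\mathcal{K}$, hence in $\mathcal{K}$, contradicting optimality of $\ew{}$. If instead $Q\neq 0$, the symmetric decomposition $\ew{}-Q^\Gamma=(\ew{}-P-Q^\Gamma)+P\in\mathcal{K}$ together with $\Gamma(\mathcal{K})=\mathcal{K}$ yields $\ew{}^\Gamma-Q\in\mathcal{K}$, contradicting optimality of $\ew{}^\Gamma$.

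The only genuinely delicate point — and the step I would treat most carefully — is the mixed case in which both $P$ and $Q$ are nonzero: one must be sure that a full decomposable perturbation can be ``peeled off'' one summand at a time. The two decompositions above do exactly this, and the reason they work is precisely that each of $\mathcal{V}$ and $\mathcal{V}^\Gamma$ already lies inside $\mathcal{K}$, so discarding one summand leaves a block--positive operator. Everything else is routine bookkeeping with the cones, so beyond establishing $\mathcal{V}^\Gamma\subseteq\mathcal{K}$ and $\Gamma(\mathcal{K})=\mathcal{K}$ I expect no further obstacle.
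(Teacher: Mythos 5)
Your argument is correct, but note that the paper itself offers no proof of this theorem: it is quoted verbatim from Lewenstein, Kraus, Cirac and Horodecki \cite{Lew}, so there is no in-paper proof to compare against. Your cone-theoretic rendering is essentially the standard argument from that reference, and it is complete given the paper's Definitions 2--4 (read, as you do, with ``any'' meaning ``any nonzero''). The three preliminary facts are all sound: $\mathcal{V}\subseteq\mathcal{K}$ is trivial, the identity $\<\psi\ot\phi|\ew{}^\Gamma|\psi\ot\phi\>=\<\psi\ot\bar\phi|\ew{}|\psi\ot\bar\phi\>$ together with surjectivity of $\phi\mapsto\bar\phi$ gives $\Gamma(\mathcal{K})=\mathcal{K}$, and the reformulation of optimality of $\ew{}^\Gamma$ as ``$\ew{}-Q^\Gamma\notin\mathcal{K}$ for all nonzero $Q\in\mathcal{V}$'' is exactly the bridge needed. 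The forward implication is, as you say, just $\mathcal{V},\mathcal{V}^\Gamma\subseteq\mathcal{D}$, and in the converse the decompositions $\ew{}-P=(\ew{}-D)+Q^\Gamma$ and $\ew{}-Q^\Gamma=(\ew{}-D)+P$ correctly peel off one summand at a time because each discarded piece lies in $\mathcal{K}$ and $\mathcal{K}$ is closed under addition; since $D\neq 0$ forces $P\neq 0$ or $Q\neq 0$, one of the two contradictions is always available. The only cosmetic caveat is that ``no longer positive'' in the paper's definitions should strictly be read as ``no longer block positive'' rather than ``no longer a genuine witness,'' but these coincide here: if $\ew{}-P$ were positive semidefinite then $\ew{}$ itself would be, so the distinction never bites.
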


Optimal positive maps (or, equivalently, optimal entanglement witnesses) provide the most efficient tool to discriminate between separable and entangled states. It is well known that any entangled state may be detected by some optimal map. In recent years there has been considerable effort in constructing and analyzing the structure of EWs
\cite{P1}--\cite{PO}. In this paper we provide a novel tool which may be used to construct new examples of positive maps (entanglement witnesses). It is based on a class of positive matrices discussed in the next section. We show that it may be used to prove positivity of several well known maps (reduction map, generalized reduction, Robertson map and many others). Further, we provide a new family of maps and prove that they are positive, indecomposable, and even both optimal and nd-optimal.

\section{A class of positive definite matrices}


In this section we provide a class of positive definite matrices that enables one to construct positive maps in matrix algebras. Let us start by recalling a well--known lemma.

\begin{lem}[\cite{Paulsen,Bhatia}]\label{thm:Bathia} A block matrix $M \in \mathbb{M}_{n+k}$
\begin{equation}\label{M}
    M= \left(\begin{array}{cc}
A & X\\
X^{\dg} & B
\end{array}\right)\ ,
\end{equation}
with $A\in \mathbb{M}_n$ and $B \in \mathbb{M}_k$ together with $A\ge0$ and $B>0$, is positive if and only if $A\ge XB^{-1}X^{\dg}$.
\end{lem}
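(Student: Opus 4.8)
The plan is to diagonalize $M$ by a congruence transformation and then read off positivity from the resulting diagonal blocks. Because $B>0$ is invertible, I would eliminate the off--diagonal blocks by subtracting a suitable multiple of the lower block row from the upper one; this amounts to congruence by the block--triangular matrix
\begin{equation}
L = \left(\begin{array}{cc} \one_n & -XB^{-1} \\ 0 & \one_k \end{array}\right)\ .
\end{equation}
This $L$ is invertible (it is triangular with identity blocks on the diagonal, so $\det L = 1$), and its inverse is obtained by flipping the sign of the off--diagonal block.

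First I would compute $LML^{\dg}$ by straightforward block multiplication. The off--diagonal blocks cancel by construction, and one is left with
\begin{equation}
L M L^{\dg} = \left(\begin{array}{cc} A - XB^{-1}X^{\dg} & 0 \\ 0 & B \end{array}\right)\ ,
\end{equation}
whose upper--left corner is precisely the Schur complement $S := A - XB^{-1}X^{\dg}$. This is the one genuine computation in the argument, and it is routine.

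Next I would invoke the fact that congruence by an invertible matrix preserves positive semidefiniteness in both directions: for any vector $v$ one has $\<v|LML^{\dg}|v\> = \<w|M|w\>$ with $w = L^{\dg}v$, and since $L^{\dg}$ is a bijection the two quadratic forms are simultaneously nonnegative. Hence $M\ge 0$ if and only if $LML^{\dg}\ge 0$, and a block--diagonal Hermitian matrix is positive semidefinite exactly when each of its diagonal blocks is. Therefore $M\ge 0$ is equivalent to the conjunction of $S\ge 0$ and $B\ge 0$; since $B>0$ is assumed, this collapses to $S\ge 0$, i.e. $A\ge XB^{-1}X^{\dg}$, which is the claim.

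I expect no serious obstacle here; the lemma is elementary, and the one point worth flagging is the essential use of the strict inequality $B>0$, which is needed both to form $B^{-1}$ and to reduce the two diagonal conditions to the single Schur--complement inequality. The congruence automatically delivers an equivalence rather than a one--way implication, so both directions of the ``if and only if'' are handled at once. An alternative route would complete the square directly in the quadratic form $\<v|M|v\>$ with $v=(u,w)$: for fixed $u$ the minimum over $w$ is attained at $w_\ast = -B^{-1}X^{\dg}u$ (again $B>0$ guarantees a genuine minimum) and equals $\<u|S|u\>$, so nonnegativity of $M$ is once more equivalent to $S\ge 0$.
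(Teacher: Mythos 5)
Your proof is correct and complete. Note that the paper itself offers no proof of this lemma: it is stated as a known result with citations to Paulsen and Bhatia, so there is nothing to compare against except the standard textbook argument, which is exactly what you give. The congruence by the block--triangular matrix $L$, the resulting block diagonalization with the Schur complement $A - XB^{-1}X^{\dg}$ in the corner, and the observation that congruence by an invertible matrix preserves positive semidefiniteness in both directions together yield the equivalence cleanly; your remark that $B>0$ is what both makes $B^{-1}$ available and collapses the two diagonal conditions to one is the right point to flag. (One can even note that the hypothesis $A\ge 0$ is redundant for the ``if'' direction, since $XB^{-1}X^{\dg}\ge 0$ forces it.)
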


\hspace{-0.55cm}We shall use this result to prove the following

\begin{thm}
\label{thm:twierdzenie} Let $\M_{N}^{K}$ be a matrix in $\m{K\cdot N} = \m{N} \ot \m{K} =: \m{N}(\m{K})$ of the following form:

\begin{equation}
\M_{N}^{K}=\left(\begin{array}{c|c|c|c}
(1-\alpha_{1})\one_{K} & -z_{12}M_{12} & \cdots & -z_{1N}M_{1N}\\
\hline -z^*_{12}M_{12}^{\dg} & (1-\alpha_{2})\one_{K} & \cdots & -z_{2N}M_{2N}\\
\hline \vdots & \vdots & \ddots & \vdots\\
\hline -z^*_{1N}M_{1N}^{\dg} & -z^*_{2N}M_{2N}^{\dg} & \cdots & (1-\alpha_{N})\one_{K}
\end{array}\right)
\end{equation}
with $\sum_{i=1}^{N}\alpha_{i}=1$ $(0\le\alpha_{i}\le1$ for $i=1,\dots,N)$,
$|z_{ij}|\le1$, and $M_{ij}\in\mathbb{M}_{K}(\mathbb{C})$, for $1\le i<j\le N$ such that

\begin{equation}\label{}
    M_{ij} M_{ij}^{\dg} = \alpha_j M_{ii}\ .
\end{equation}
If the blocks $M_{ij}$ of the matrix $\M_{N}^{K}$
satisfy the following properties:
\begin{eqnarray*}
\text{1.} & M_{ij}M_{kj}^{\dg} &= \,\, \alpha_{j}M_{ik},\\
\text{2.} & M_{ii} & \le \,\,  \alpha_{j}\one_{K},
\end{eqnarray*}
 then matrix $\M_{N}^{K}$ is positive--semidefinite.
\end{thm}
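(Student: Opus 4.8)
The plan is to exhibit $\M_N^K$ as a sum of manifestly positive ``edge'' matrices, one for each pair $i<j$, and to certify the positivity of each summand directly with Lemma~\ref{thm:Bathia}. Concretely, for every pair $1\le i<j\le N$ I would introduce the $N$-block matrix $P^{(ij)}\in\m{K\cdot N}$ which vanishes except in the four blocks labelled by $i$ and $j$, where it equals
\begin{equation}
\left(\begin{array}{cc} \alpha_j\one_K & -z_{ij}M_{ij}\\ -z^*_{ij}M_{ij}^{\dg} & \alpha_i\one_K\end{array}\right)\ .
\end{equation}
The first thing to check is the algebraic identity $\M_N^K=\sum_{i<j}P^{(ij)}$. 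The off-diagonal bookkeeping is immediate: the block $(i,j)$ with $i<j$ receives the single contribution $-z_{ij}M_{ij}$ from $P^{(ij)}$ and from no other summand. The diagonal bookkeeping is the first place a hypothesis enters: the block $(k,k)$ collects the weight $\alpha_j$ from every pair $\{k,j\}$ with $j>k$ and the weight $\alpha_i$ from every pair $\{i,k\}$ with $i<k$, so its total is $\big(\sum_{l\ne k}\alpha_l\big)\one_K=(1-\alpha_k)\one_K$ precisely because $\sum_l\alpha_l=1$.

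The heart of the argument is then to show $P^{(ij)}\ge0$, which reduces to positivity of the displayed $2K\times 2K$ block. Here I would apply Lemma~\ref{thm:Bathia} with $A=\alpha_j\one_K$, $B=\alpha_i\one_K>0$ and $X=-z_{ij}M_{ij}$, so that positivity is equivalent to the Schur-complement inequality $A\ge XB^{-1}X^{\dg}$, i.e.
\begin{equation}
\alpha_j\one_K\ \ge\ \frac{|z_{ij}|^2}{\alpha_i}\,M_{ij}M_{ij}^{\dg}\ .
\end{equation}
At this point the defining relation $M_{ij}M_{ij}^{\dg}=\alpha_jM_{ii}$ is exactly what is needed: it rewrites the right-hand side as $\tfrac{|z_{ij}|^2\alpha_j}{\alpha_i}M_{ii}$, and after cancelling the common factor $\alpha_j$ the inequality collapses to $\alpha_i\one_K\ge|z_{ij}|^2M_{ii}$. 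This follows at once from $|z_{ij}|\le1$ together with property~2 in the form $M_{ii}\le\alpha_i\one_K$. Summing over all pairs gives $\M_N^K\ge0$.

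I expect the genuinely nontrivial step to be the choice of weights in $P^{(ij)}$ rather than any single estimate. The ``cross'' assignment---putting $\alpha_j$ (not $\alpha_i$) in the $i$-th diagonal slot---is forced by the diagonal bookkeeping via $\sum_l\alpha_l=1$, and it is exactly this choice that makes the factor $\alpha_j$ in $M_{ij}M_{ij}^{\dg}=\alpha_jM_{ii}$ cancel, so that the Schur condition becomes precisely property~2; any other split would either fail to reproduce $(1-\alpha_k)\one_K$ on the diagonal or leave an asymmetric, unusable inequality. Two minor points remain. When some $\alpha_i=0$ the block $B=\alpha_i\one_K$ is not invertible and Lemma~\ref{thm:Bathia} does not apply directly, but then $M_{ii}\le\alpha_i\one_K=0$ together with $M_{ii}\ge0$ forces $M_{ii}=0$, hence $M_{ij}=0$, so the corresponding $P^{(ij)}$ degenerates to a diagonal positive block handled trivially. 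Finally, I note that the full force of property~1 (for $i\ne k$) is not needed for positivity---only its diagonal specialization $M_{ij}M_{ij}^{\dg}=\alpha_jM_{ii}$ is used---and that the same conclusion can be reached without Lemma~\ref{thm:Bathia} by expanding $\langle\Psi|\M_N^K|\Psi\rangle$, bounding $|\langle\psi_i|M_{ij}|\psi_j\rangle|\le\sqrt{\alpha_i\alpha_j}\,\|\psi_i\|\,\|\psi_j\|$ through Cauchy--Schwarz and the same two hypotheses, and closing with $\big(\sum_i\sqrt{\alpha_i}\,\|\psi_i\|\big)^2\le\sum_i\|\psi_i\|^2$.
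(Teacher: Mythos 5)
Your proof is correct, but it takes a genuinely different route from the paper. The paper argues by induction on the number of blocks $N$: it peels off the last block row and column with the Schur--complement criterion of Lemma~\ref{thm:Bathia}, renormalizes the weights $\alpha_i\mapsto\alpha_i/(1-\alpha_N)$ and the blocks $M_{ij}\mapsto M'_{ij}$, and shows that the resulting Schur complement dominates a matrix of the same form $\M_{N-1}^{K}$; crucially, the full strength of property~1 (the case $i\neq k$ of $M_{ij}M_{kj}^{\dg}=\alpha_jM_{ik}$) is invoked to check that the renormalized blocks again satisfy the hypotheses. Your decomposition $\M_{N}^{K}=\sum_{i<j}P^{(ij)}$ into positive two-block ``edge'' matrices replaces the induction and the renormalization bookkeeping by a single application of Lemma~\ref{thm:Bathia} per pair, and your accounting of the diagonal weights via $\sum_l\alpha_l=1$ and of the Schur condition via $M_{ij}M_{ij}^{\dg}=\alpha_jM_{ii}$, $M_{ii}\le\alpha_i\one_K$, $|z_{ij}|\le1$ is accurate (you correctly read the paper's property~2, whose stated index $\alpha_j$ is evidently a typo for $\alpha_i$, as the paper's own proof confirms). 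Your approach buys two things: it is shorter and self-contained, and it exposes that property~1 for $i\neq k$ is not needed for positive semidefiniteness at all --- only the diagonal relation and property~2 enter --- so your argument actually proves a slightly stronger statement; the paper needs property~1 only to make its inductive hypothesis close under the renormalization. Your treatment of the degenerate cases $\alpha_i=0$ (where $B$ is singular) and implicitly $\alpha_j=0$ (where $M_{ij}=0$ follows from the defining relation) is also more careful than the paper's, which divides by $1-\alpha_N$ and by $\alpha_i,\alpha_j$ without comment. The only point worth tightening is the claim $M_{ii}\ge0$ used both in the cancellation step and in the degenerate case: it follows from $M_{ii}=\alpha_j^{-1}M_{ij}M_{ij}^{\dg}$ when some $\alpha_j>0$, and such a $j$ always exists unless all the relevant $M_{ij}$ vanish anyway, so the gap is cosmetic. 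The closing Cauchy--Schwarz argument you sketch is likewise valid and dispenses with Lemma~\ref{thm:Bathia} entirely.
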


\begin{proof}
We will perform a proof by induction with respect
to the number of blocks $N$ in a matrix $\M_{N}^{K}$. Let
us assume that the $\M_{N-1}^{K}$ matrix is positive. From Theorem
\ref{thm:Bathia} we know that to prove positivity of matrix $\M_{N}^{K}$
it is enough to show that the following inequality holds

\begin{equation}\label{eq:inequality}
\M_{N-1}^{K} \ge \frac{\alpha_{N}}{1-\alpha_{N}}\mathcal{M}(\mathbf{z},M_{ij}),
\end{equation}
with

\begin{equation}
\mathcal{M}(\mathbf{z},M_{ij}):=\left(\begin{array}{c|c|c|c}
|z_{1N}|^{2}M_{11} & z_{1N}z^*_{2N}M_{12} & \cdots & z_{1N}z^*_{N-1,N}M_{1,N-1}\\
\hline z_{2N}z^*_{1N}M_{12}^{\dg} & |z_{2N}|^{2}M_{22} & \cdots & z_{2N}z^*_{N-1,N}M_{2,N-1}\\
\hline \vdots & \vdots & \ddots & \vdots\\
\hline z_{N-1,N}z^*_{1N}M_{1,N-1}^{\dg} & z_{N-1,N}z^*_{2N}M_{2,N-1}^{\dg} & \cdots & |z_{N-1,N}|^{2}M_{N-1,N-1}
\end{array}\right)\ge0,
\end{equation}
where the last inequality is a natural consequence of the construction. We introduce a normalization procedure for coefficients
$\alpha_{i}$ in a following way

\begin{equation}
\alpha'_{i}=\frac{\alpha_{i}}{1-\alpha_{N}}\quad,\text{ for }i=1,\dots,N-1,
\end{equation}
where $\sum_{i=1}^{N-1}\alpha'_{i}=1$. Applying this normalization
for submatrices $M_{ij}$ gives us

\begin{equation}
M'_{ij}=\sqrt{\frac{\alpha_{i}'\alpha'_{j}}{\alpha_{i}\alpha_{j}}}M_{ij}\quad,\text{ for }1\le i<j\le N-1.
\end{equation}
To show inequality (\ref{eq:inequality}) it is enough to prove that

\begin{equation}
\M_{\beta}\equiv\left(\begin{array}{c|c|c|c}
B_{1} & -z'_{12}M'_{12} & \cdots & -z'_{1,N-1}M'_{1,N-1}\\
\hline -z'^*_{12}(M'_{12})^{\dg} & B_{2} & \cdots & -z'_{2,N-1}M'_{2,N-1}\\
\hline \vdots & \vdots & \ddots & \vdots\\
\hline -z'^*_{1,N-1}(M'_{1,N-1})^{\dg} & -z'^*_{2,N-1}(M'_{2,N-1})^{\dg} & \cdots & B_{N}
\end{array}\right)\ge0,
\end{equation}
where
\begin{eqnarray*}
B_{i} & = & (1-\alpha'_{i}(1-\alpha_{N}))\one_{K}-|z_{iN}|^{2}\alpha_{N}M'_{ii},\\
z'_{ij} & = & (1-\alpha_{N})z_{ij}+\alpha_{N}z_{iN}z^*_{jN},
\end{eqnarray*}
with

\begin{equation}
|z'_{ij}|\le(1-\alpha_{N})|z_{ij}|+\alpha_{N}|z_{iN}z^*_{jN}|\le1.
\end{equation}
A simple calculation shows that

\begin{equation}
M'_{ij}(M'_{kj})^{\dg}=\frac{\alpha'_{j}}{\alpha_{j}}\sqrt{\frac{\alpha_{i}'\alpha'_{k}}{\alpha_{i}\alpha_{k}}}\, M_{ij}M_{kj}^{\dg}=\frac{\alpha'_{j}}{\alpha_{j}}\sqrt{\frac{\alpha_{i}'\alpha'_{k}}{\alpha_{i}\alpha_{k}}}\,\alpha_{j}M_{ik}=\alpha'_{j}\, M'_{ik}.
\end{equation}
By replacing $k$ with $i$ in the above formula one gets

\begin{equation}
M'_{ij}(M'_{ij})^{\dg}=\alpha'_{j}\, M'_{ii}
\end{equation}
and, due to the assumption,

\begin{equation}
M'_{ii}=\frac{\alpha_{i}'}{\alpha_{i}}M_{ii}\le\frac{\alpha_{i}'}{\alpha_{i}}\,\alpha_{i}\one_{K}=\alpha'_{i}\one_{K}.
\end{equation}
The last inequality implies

\begin{equation}
B_{i} \ge (1-\alpha'_{i}(1-\alpha_{N}))\one_{K}-|z_{iN}|^{2}\alpha_{N}\alpha'_{i}\one_{K} \ge  (1-\alpha'_{i})\one_{K}.
\end{equation}
As a consequence one finds

\begin{equation}
\M_{\beta}\ge\left(\begin{array}{c|c|c|c}
(1-\alpha'_{1})\one_{K} & -z'_{12}M'_{12} & \cdots & -z'_{1,N-1}M'_{1,N-1}\\
\hline -z'^*_{12}(M'_{12})^{\dg} & (1-\alpha'_{2})\one_{K} & \cdots & -z'_{2,N-1}M'_{2,N-1}\\
\hline \vdots & \vdots & \ddots & \vdots\\
\hline -z'^*_{1,N-1}(M'_{1,N-1})^{\dg} & -z'^*_{2,N-1}(M'_{2,N-1})^{\dg} & \cdots & (1-\alpha'_{N-1})\one_{K}
\end{array}\right)=\M_{N-1}^{K}.
\end{equation}
Since, by assumption, $\M_{N-1}^{K}$ is a positive matrix, we have completed the proof.
\end{proof}

\section{New proofs of positivity for a series of linear maps}

In this section we use Theorem \ref{thm:twierdzenie} to provide new proofs of positivity for a series of well--known maps. Let us recall that to prove positivity of a given map $\Lambda : \mathcal{B}(\mathcal{H}_A) \rightarrow \mathcal{B}(\mathcal{H}_B)$ it is enough to show that each rank-1 projector $P\in\mathcal{B}(\mathcal{H}_A)$ is mapped \textit{via} $\Lambda$ into a positive element  in $\mathcal{B}(\mathcal{H}_B)$.

\subsection{Generalized Reduction map}

Let us start our consideration with a generalized reduction map, $\red N{\mathbf{z}}:\m N\rightarrow\m N$, defined by

\begin{equation}
\red N{\mathbf{z}}(e_{ij})=\frac{1}{N-1}\begin{cases}
\one_{N}-e_{ii} & \text{for }i=j,\\
-z_{ij}e_{ij} & \text{for }i<j,
\end{cases}
\end{equation}
where $e_{ij}\in\m N$ stands for fixed orthonormal basis and $\mathbf{z}=\{z_{12,}z_{13,}\dots,z_{N-1,N}\}$
denotes a vector of complex numbers such that $|z_{ij}|\le1$.  Note, that if $z_{ij}=1$, then the above formula reproduces the standard normalized reduction map

\begin{equation}\label{}
    \mathcal{R}_N(X) = \frac{1}{N-1} \Big( \one_N {\rm Tr}(X) - X \Big) \ .
\end{equation}
Let us consider a rank-1 projector $P_{N}=\proj{\psi}{\psi}$, with $\psi=\bigoplus_{i=1}^{N}\sqrt{\alpha_{i}}x_{i}$, and $x_{i}\in\mathbb{C}$, $\alpha_{i}\in[0,1]$, $\sum_{i=1}^{N}\alpha_{i}=1$. Without loosing generality we can assume $|x_{i}|^{2}=1$ for all $i=1,\dots,N$. Now,

\begin{equation}
\red N{\mathbf{z}}(P_{N}) = \left[\begin{array}{cccc}
1-\alpha_{1} & -z_{12}M_{12} & \cdots & -z_{1N}M_{1N}\\
-z^*_{12}M_{12}^{\dg} & 1-\alpha_{2} & \cdots & -z_{2N}M_{2N}\\
\vdots & \vdots & \ddots & \vdots\\
-z^*_{1N}M_{1N}^{\dg} & -z^*_{2N}M_{2N}^{\dg} & \cdots & 1-\alpha_{N}
\end{array}\right]=\M_{N}^{1}\ ,
\end{equation}
with

\begin{equation}
M_{ij}=\sqrt{\alpha_{i}\alpha_{j}}x_{i}x^*_{j}\, ,\qquad\text{ for }1\le i<j\le N
\end{equation}
As we can see $|z_{ij}|=1$ by definition of $\red N{\mathbf{z}}$ and
for all off--diagonal blocks of the matrix $\M_{N}^{1}$ the following holds

\begin{equation}
M_{ij}M_{kj}^{\dg}=\left(\sqrt{\alpha_{i}\alpha_{j}}x_{i}x^*_{j}\right)\left(\sqrt{\alpha_{k}\alpha_{j}}x_{k}x^*_{j}\right)^{\dg}=\alpha_{j}\sqrt{\alpha_{i}\alpha_{k}}\,|x_{j}|^{2}x_{i}x^*_{k}=\alpha_{j}M_{ik}
\end{equation}
for all $1\le i<j\le N$, in particular $M_{ij}M_{ij}^{\dg}=\alpha_{j}M_{ii}$, and

\begin{equation}
M_{ii}=\sqrt{\alpha_{i}\alpha_{i}}x_{i}x^*_{i}\le\alpha_{i}\one_{1}.
\end{equation}
We have shown that conditions (1) and (2) of a Theorem \ref{thm:twierdzenie}
are satisfied for a matrix $\M_{N}^{1}$ which proves positivity of
a generalized Reduction map.

\subsection{Robertson map}

Let us now consider  an action of a well-known Robertson map

\begin{equation}
\Psi_{Rob}(X)=\frac{1}{2}\left(\begin{array}{c|c}
\one_{2}\tr X_{22} & -[X_{12}+\red 2{}(X_{21})]\\
\hline -[X_{21}+\red 2{}(X_{12})] & \one_{2}\tr X_{11}
\end{array}\right),
\end{equation}
where $X_{ij}\in\m 2$ and $\red 2{}$ stands
for a reduction map in $\m 2$, on a rank-1 projector $P_{4}=\proj{\psi}{\psi}$, with $\psi=\sqrt{\alpha_{1}}\psi_{1}\oplus\sqrt{\alpha_{2}}\psi_{2}$ ($\psi_{i}\in\mathbb{C}^{2}$, $\alpha_{i}\in[0,1]$ for $i=1,2$ and
$\alpha_{1}+\alpha_{2}=1$). Again without loosing generality we assume
$\inner{\psi_{i}}{\psi_{i}}=1$, $i=1,2$. Rewriting the reduction
map as $\red 2{}(X)=\sigma_{y}X^{T}\sigma_{y}^{\dg}$ allows us to represent $\Psi_{Rob}(P_4)$ as

\begin{equation}
\Psi_{Rob}(P_{4})=\frac{1}{2}\left(\begin{array}{c|c}
(1-\alpha_{1})\one_{2} & -M_{12}\\
\hline -M_{12}^{\dg} & (1-\alpha_{2})\one_{2}
\end{array}\right),
\end{equation}
with the off--diagonal blocks of the form

\begin{equation}
M_{12}=\sqrt{\alpha_{1}\alpha_{2}}\left[\proj{\psi_{1}}{\psi_{2}}+\sigma_{y}\proj{\psi^*_{2}}{\psi^*_{1}}\sigma_{y}^{\dg}\right]
\end{equation}
and $z_{12}=1$. We want to check whether conditions (1) and (2) of
the Theorem \ref{thm:twierdzenie} are satisfied. Since for any antisymmetric
and unitary matrix $U$ one has $\inner{\psi}{U\psi^*}=0$, thus
in particular for $\sigma_{y}$ one has $\inner{\psi_{1}}{\sigma_{y}\psi^*_{1}}=\inner{\psi_{2}}{\sigma_{y}\psi^*_{2}}=0$,
and as a consequence
\begin{equation}
M_{12}M_{12}^{\dg}  =  \alpha_{1}\alpha_{2}\left[\proj{\psi_{1}}{\psi_{1}}+\sigma_{y}\proj{\psi^*_{1}}{\psi^*_{1}}\sigma_{y}^{\dg}\right]=\alpha_{2}M_{11}.
\end{equation}
Now, because $\ket{\psi_{1}}$ and $\sigma_{y}\ket{\psi^*_{1}}$
are two normalized orthonormal vectors, they define an orthonormal
decomposition of an identity matrix and thus

\begin{equation}\label{eq:robertson1}
M_{ii}=\alpha_{1}\left[\proj{\psi_{1}}{\psi_{1}}+\sigma_{y}\proj{\psi^*_{1}}{\psi^*_{1}}\sigma_{y}^{\dg}\right]=\alpha_{i}\one_{2}
\end{equation}
which completes the proof.

\subsection{Generalization of the Robertson map \cite{Justyna2}}

Let us recall a generalization of the Robertson map to the  $\m{4N}$ algebra, given by

\begin{equation}
\Psi_{4N}(X)=\frac{1}{2N}\left(\begin{array}{c|c}
\one_{2N}\tr X_{22} & -[X_{12}+UX_{21}^{T}U^{\dg})]\\
\hline -[X_{21}+UX_{12}^{T}U^{\dg})] & \one_{2N}\tr X_{11}
\end{array}\right),
\end{equation}
with $U\in\m{2N}$ denoting an arbitrary antisymmetric and unitary matrix. Acting with a map $\Psi_{4N}$ on a projector $P_{4N}=\proj{\psi}{\psi}$,
with $\psi=\sqrt{\alpha_{1}}\psi_{1}\oplus\sqrt{\alpha_{2}}\psi_{2}$
($\psi_{i}\in\C^{2N}$, $\alpha_{1},\alpha_{2}\in[0,1]$, $\alpha_{1}+\alpha_{2}=1$
and $\inner{\psi_{i}}{\psi_{i}}=1$), leads to

\begin{equation}
\Psi_{4N}(P_{4N})=\frac{1}{2N}\left(\begin{array}{c|c}
(1-\alpha_{1})\one_{2N} & -M_{12}\\
\hline -M_{12}^{\dg} & (1-\alpha_{2})\one_{2N}
\end{array}\right),
\end{equation}
with $M_{ij}=\sqrt{\alpha_{i}\alpha_{j}}\left[\proj{\psi_{i}}{\psi_{j}}+U\left(\proj{\psi_{j}}{\psi_{i}}\right)^{T}U^{\dg}\right]$ and $z_{ij}=1$. Direct calculation shows that

\begin{equation}
M_{12}M_{12}^{\dg} = \alpha_{1}\alpha_{2}\left[\proj{\psi_{1}}{\psi_{1}}+U\proj{\psi^*_{1}}{\psi^*_{1}}U^{\dg}\right]=\alpha_{2}M_{11}
\end{equation}
and

\begin{equation}
M_{ii}=\alpha_{i}\left[\proj{\psi_{i}}{\psi_{i}}+U\left(\proj{\psi_{i}}{\psi_{i}}\right)^{T}U^{\dg}\right]\le\alpha_{i}\one_{2N}.
\end{equation}
This is what we sought out to be proved (the last inequality is a consequence of a simple fact that $\ket{\psi_{i}}$ and $U\ket{\psi_{i}}$ are two orthonormal
vectors and can be completed to a full orthonormal decompoition of
the identity).

\subsection{Complex extension of the Robertson map \cite{Justyna3}}

Both conditions from the Theorem \ref{thm:twierdzenie} are satisfied by the off--diagonal blocks of a  matrix obtained from acting on a rank-1 projector $P_{2N}$ with a map $\Psi_{2N}:\m{2N}\rightarrow\m{2N}$ defined as

\begin{equation}
\Psi_{2N}(X)=\frac{1}{2(N-1)}\left[\begin{array}{c|c|c|c}
A_{1} & -z_{12}B_{12} & \cdots & -z_{1N}B_{1N}\\
\hline -z^*_{12}B_{21} & A_{2} & \cdots & -z_{2N}B_{2N}\\
\hline \vdots & \vdots & \ddots & \vdots\\
\hline -z^*_{1N}B_{N1} & -z^*_{2N}B_{N2} & \cdots & A_{N}
\end{array}\right],
\end{equation}
with
\begin{eqnarray*}
A_{i} & = & \one_{2}\left(\tr X-\tr X_{ii}\right),\quad\text{ for }i=1,\dots,N\\
B_{ij} & = & X_{ij}+\red 2{}(X_{ji}),\quad\text{ for }1\le i<j\le N
\end{eqnarray*}
and $|z_{ij}|\le1$ for $1\le i<j\le 2N$, that is,

\begin{equation}
\Psi_{2N}(P_{2N})=\frac{1}{2(N-1)}\left[\begin{array}{c|c|c|c}
(1-\alpha_{1})\one_{2} & -z_{12}M_{12} & \cdots & -z_{1N}M_{1N}\\
\hline -z^*_{12}M_{12}^{\dg} & (1-\alpha_{2})\one_{2} & \cdots & -z_{2N}M_{2N}\\
\hline \vdots & \vdots & \ddots & \vdots\\
\hline -z^*_{1N}M_{1N}^{\dg} & -z^*_{2N}M_{2N}^{\dg} & \cdots & (1-\alpha_{N})\one_{2}
\end{array}\right],
\end{equation}
where $P_{2N}=\proj{\psi}{\psi}$ ($\psi=\bigoplus_{i=1}^{N}\sqrt{\alpha_{i}}\psi_{i}$
with $\psi_{i}\in\C^{2}$, $\sum_{i=1}^{N}\alpha_{i}=1$, $\alpha_{i}\in[0,1]$
and $\inner{\psi_{i}}{\psi_{i}}=1$ for $i=1,\dots,N$) and the off--diagonal blocks are defined as follows:

\begin{equation}
M_{ij}=\sqrt{\alpha_{i}\alpha_{j}}\left[\proj{\psi_{i}}{\psi_{j}}+\red 2{}\left(\proj{\psi_{j}}{\psi_{i}}\right)\right].
\end{equation}
Indeed, simple calculation leads to

\begin{equation}
M_{ij}M_{kj}^{\dg}  =  \alpha_{j}\sqrt{\alpha_{i}\alpha_{k}}\left[\proj{\psi_{i}}{\psi_{k}}+\sigma_{y}\proj{\psi^*_{i}}{\psi^*_{k}}\sigma_{y}^{\dg}\right]=\alpha_{j}M_{ik}.
\end{equation}
In particular, $M_{ij}M_{ij}^{\dg}=\alpha_{j}M_{ii}$. Also, analogously to Eq. (\ref{eq:robertson1}), $M_{ii}=\alpha_{i}\one_{2}$.

\section{A new class of maps in $\m{N}\ot \m{2K}$}

In this section we provide a new class of positive maps in $\m{N\cdot 2K}$.
Any matrix in $\m{N\cdot2K}$ may represented as a block $N\times N$ matrix in $\m{N}(\m{2K})$. Let us define a map $\Psi:\m{N\cdot2K}\rightarrow\m{N\cdot2K}$
in the following way
\[
\Psi(X)=\frac{1}{2K(N-1)}\left[\begin{array}{c|c|c|c}
A_{1} & -z_{12}B_{12} & \cdots & -z_{1N}B_{1N}\\
\hline -z^*_{12}B_{21} & A_{2} & \cdots & -z_{2N}B_{2N}\\
\hline \vdots & \vdots & \ddots & \vdots\\
\hline -z^*_{1N}B_{N1} & -z^*_{2N}B_{N2} & \cdots & A_{N}
\end{array}\right],
\]
with
\begin{eqnarray*}
A_{i} & = & \one_{2K}\left(\tr X-\tr X_{ii}\right),\ \ \ \ i=1,\dots,N\\
B_{ij} & = & X_{ij}+UX_{ji}^{T}U^{\dg},\ \ \ \ 1\le i<j\le N
\end{eqnarray*}
with $U\in\m{2K}$ denoting an arbitrary unitary
and antisymmetric matrix and $|z_{ij}|\le1$.

\subsection{Positivity}

\begin{prop}
$\Psi$ defines a positive map
\end{prop}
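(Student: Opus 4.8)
The plan is to follow exactly the strategy used for all the previous maps in Section~3: reduce the positivity of $\Psi$ to an application of Theorem~\ref{thm:twierdzenie} by acting with $\Psi$ on an arbitrary rank-1 projector. Since positivity of a map is equivalent to sending every rank-1 projector to a positive operator, I would take $P=\proj{\psi}{\psi}$ with $\psi=\bigoplus_{i=1}^{N}\sqrt{\alpha_{i}}\psi_{i}$, where $\psi_{i}\in\C^{2K}$, $\alpha_{i}\in[0,1]$, $\sum_{i=1}^{N}\alpha_{i}=1$, and (without loss of generality) $\inner{\psi_{i}}{\psi_{i}}=1$. Writing $P$ in block form, its $(i,j)$ block is $X_{ij}=\sqrt{\alpha_{i}\alpha_{j}}\proj{\psi_{i}}{\psi_{j}}$, so $\tr X=1$ and $\tr X_{ii}=\alpha_{i}$.

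First I would compute the diagonal blocks of $\Psi(P)$: $A_{i}=\one_{2K}(\tr X-\tr X_{ii})=(1-\alpha_{i})\one_{2K}$, which already matches the required diagonal form $(1-\alpha_{i})\one_{2K}$ in $\M_{N}^{2K}$. Next I would compute the off-diagonal blocks. Since $X_{ji}^{T}=\sqrt{\alpha_{i}\alpha_{j}}\left(\proj{\psi_{j}}{\psi_{i}}\right)^{T}=\sqrt{\alpha_{i}\alpha_{j}}\proj{\psi^*_{i}}{\psi^*_{j}}$, the block $B_{ij}=X_{ij}+UX_{ji}^{T}U^{\dg}$ takes the form
\begin{equation}
M_{ij}=\sqrt{\alpha_{i}\alpha_{j}}\left[\proj{\psi_{i}}{\psi_{j}}+U\proj{\psi^*_{i}}{\psi^*_{j}}U^{\dg}\right],
\end{equation}
which is the natural $2K$-dimensional analogue of the blocks appearing in the generalized Robertson map. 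With this identification, $\Psi(P)=\frac{1}{2K(N-1)}\M_{N}^{2K}$, and it remains to verify the two hypotheses of Theorem~\ref{thm:twierdzenie}.

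The bulk of the work is checking conditions (1) and (2), and this is where the antisymmetric-unitary property of $U$ does the real work. For condition~(1) I would form $M_{ij}M_{kj}^{\dg}$ and expand the product of the two bracketed sums; the cross terms vanish because for an antisymmetric unitary $U$ one has $\inner{\psi}{U\psi^*}=0$ for every $\psi$, exactly as exploited in the Robertson-map computations. The surviving terms combine to give $\alpha_{j}\sqrt{\alpha_{i}\alpha_{k}}\left[\proj{\psi_{i}}{\psi_{k}}+U\proj{\psi^*_{i}}{\psi^*_{k}}U^{\dg}\right]=\alpha_{j}M_{ik}$, so condition~(1) holds (and setting $k=i$ gives the auxiliary relation $M_{ij}M_{ij}^{\dg}=\alpha_{j}M_{ii}$). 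For condition~(2) I would note that $\ket{\psi_{i}}$ and $U\ket{\psi^*_{i}}$ are orthonormal (orthogonality again from $\inner{\psi_{i}}{U\psi^*_{i}}=0$, normality from unitarity of $U$), so $M_{ii}=\alpha_{i}\left[\proj{\psi_{i}}{\psi_{i}}+U\proj{\psi^*_{i}}{\psi^*_{i}}U^{\dg}\right]$ is $\alpha_{i}$ times a rank-2 projector, hence $M_{ii}\le\alpha_{i}\one_{2K}$.

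I expect the main obstacle to be the careful bookkeeping in the cross-term cancellation of condition~(1): unlike the $2\times2$ Robertson case, the blocks now act on $\C^{2K}$, so I must make sure the identity $\inner{\psi}{U\psi^*}=0$ is applied to the correct inner products when expanding $\left[\proj{\psi_{i}}{\psi_{j}}+U\proj{\psi^*_{i}}{\psi^*_{j}}U^{\dg}\right]\left[\proj{\psi_{k}}{\psi_{j}}+U\proj{\psi^*_{k}}{\psi^*_{j}}U^{\dg}\right]^{\dg}$, and that the two mixed terms genuinely vanish rather than merely simplifying. Once that cancellation is verified, Theorem~\ref{thm:twierdzenie} applies directly and yields $\M_{N}^{2K}\ge0$, hence $\Psi(P)\ge0$ for every rank-1 projector $P$, which establishes that $\Psi$ is positive.
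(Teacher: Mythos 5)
Your proposal is correct and follows essentially the same route as the paper's proof: reduce positivity to the action on a rank-1 projector $P=\proj{\psi}{\psi}$ with $\psi=\bigoplus_i\sqrt{\alpha_i}\psi_i$, identify the blocks $M_{ij}=\sqrt{\alpha_i\alpha_j}\bigl[\proj{\psi_i}{\psi_j}+U\proj{\psi_i^*}{\psi_j^*}U^{\dg}\bigr]$, and verify conditions (1) and (2) of Theorem \ref{thm:twierdzenie} using $\inner{\psi}{U\psi^*}=0$ for the antisymmetric unitary $U$. The cross-term cancellation you flag as the main concern indeed goes through because the contracted index in $M_{ij}M_{kj}^{\dg}$ is $j$ in both terms, so only $\inner{\psi_j}{U\psi_j^*}=0$ is needed.
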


\begin{proof}
The first problem we want to tackle is positivity of a map $\Psi$. Let us consider a rank-1 projector $P=\proj{\psi}{\psi}$, where $\psi\in\mathbb{C}^{N\cdot2K}$ denotes an arbitrary vector. Since $\mathbb{C}^{N\cdot2K}=\bigoplus_{i=1}^{N}\mathbb{C}^{2K}$,
we can represent $\psi$ as follows

\begin{equation}
\ket{\psi}=\bigoplus_{i=1}^{N}\sqrt{\alpha_{i}}\ket{\psi_{i}},
\end{equation}
with $\psi_{i}\in\mathbb{C}^{2K}$, $\alpha_{i}\in[0,1]$ for $i=1,\dots,N$
and $\sum_{i=1}^{N}\alpha_{i}=1$. In addition, for simplicity, we can assume that $\inner{\psi_{i}}{\psi_{i}}=1$ for $i=1,\dots,N$, and then

\begin{equation}
\Psi(P)=\frac{1}{2K(N-1)}\left[\begin{array}{c|c|c|c}
(1-\alpha_{1})\one_{2K} & -z_{12}M_{12} & \cdots & -z_{1N}M_{1N}\\
\hline -z^*_{12}M_{21} & (1-\alpha_{2})\one_{2K} & \cdots & -z_{2N}M_{2N}\\
\hline \vdots & \vdots & \ddots & \vdots\\
\hline -z^*_{1N}M_{N1} & -z^*_{2N}M_{N2} & \cdots & (1-\alpha_{N})\one_{2K}
\end{array}\right],
\end{equation}
with

\begin{equation}
M_{ij}  = \sqrt{\alpha_{i}\alpha_{j}}\Big[\proj{\psi_{i}}{\psi_{j}}+U\proj{\psi^*_{i}}{\psi^*_{j}}U^{\dg}\Big].
\end{equation}
We want to check whether all conditions from Theorem \ref{thm:twierdzenie} are satisfied.  Taking into account that $U$ is an unitary and antisymmetric matrix one has $\inner{\psi_{i}}{U\psi^*_{i}}=\inner{\psi^*_{i}U^{\dg}}{\psi_{i}}=0$, and thus direct calculation leads to

\begin{equation}
M_{ij}M_{ik}^{\dg}  = \alpha_{j}\sqrt{\alpha_{i}\alpha_{k}}\Big[\proj{\psi_{i}}{\psi_{k}}+U\proj{\psi^*_{i}}{\psi^*_{k}}U^{\dg}\Big]=\alpha_{j}M_{ik}.
\end{equation}
By replacing $k$ with $j$ one gets $M_{ij}M_{ij}^{\dg}=\alpha_{j}M_{ii}$. Moreover, since vectors $\ket{\psi_{i}}$ and $U\ket{\psi^*_{i}}$
are mutually orthogonal and normalized, one gets

\begin{equation}
M_{ii}=\alpha_{i}\Big[\proj{\psi_{i}}{\psi_{i}}+U\Big(\proj{\psi_{i}}{\psi_{i}}\Big)^{T}U^{\dg}\Big]\le\alpha_{i}\one_{2K}.
\end{equation}
Therefore we have proved that  $\Psi(P)$ is positive--semidefinite, which completes the proof.
\end{proof}
\subsection{Indecomposibility }

In order to prove that a given map is indecomposable it is enough to find an entangled PPT state $\rho$ such that $\tr(\ew{}\rho)<0$. Let $\ew{\Psi}$ be an EW corresponding to a positive map $\Psi$

\begin{equation}
\ew{\Psi}= \frac 1d \sum_{i,j=1}^{d}e_{ij}\ot W_{ij},
\end{equation}
where $W_{ij}=\Psi(e_{ij})$ and, to simplify notation, we denote $d:=N\cdot 2K$. Let us consider a following construction for the state $\rho$:

\begin{equation}
\rho=\frac{1}{2k+1}\sum_{i,j=1}^{d}e_{ij}\ot\rho_{ij},
\end{equation}
where the diagonal blocks of a state $\rho$ are given by

\begin{equation}
\rho_{ii}=\frac{\one_{d}}{d}-\big(2K(N-1)-1\big)W_{ii}\enskip\text{ for }\enskip i=1,\dots,d,
\end{equation}
and for the off-diagonal blocks one has
\begin{enumerate}
\item if $0<|i-j|<2K$,  then $\rho_{ij}=\mathbb{O}_{d}$,
\item if $|i-j|=2K \ell$, where $\ell=1,\dots,(N-1)$, then for each $\ell$ and $i=1,\dots,2K(N-\ell)$ one has
\begin{equation}
\rho_{i,i+2K \ell}=-W_{i,i+2K\cdot\ell}\ ,
\end{equation}
\item if $|i-j|>2K$, with $|i-j|\neq2K \ell$, then
\begin{equation}
\rho_{ij}=\widetilde{e}_{ij}=\frac{z_{ij}}{(2K)^2 N (N-1)}\, e_{ij}\ ,
\end{equation}
where $\{e_{ij}\}_{i,j=1}^{d}$ stands for an orthonormal basis in
$\m d$.
\end{enumerate}
One proves

\begin{prop}
$\rho \geq 0$ and $\rho^\Gamma \geq 0$, i.e. $\rho$ represents a PPT state.
\end{prop}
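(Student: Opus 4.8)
The plan is to reduce both positivity statements to Theorem \ref{thm:twierdzenie}, since the state $\rho$ was engineered precisely so that its blocks inherit the algebraic relations of the matrices $\M_N^K$. First I would compute the blocks $W_{ij}=\Psi(e_{ij})$ explicitly. Writing each index as $i=(a-1)2K+\mu$, with $a\in\{1,\dots,N\}$ the block label and $\mu\in\{1,\dots,2K\}$ the position inside the block, one reads off from the definition of $\Psi$ that $W_{ii}$ is block-diagonal with $b$-th diagonal block $\frac{1}{2K(N-1)}\one_{2K}(1-\delta_{ab})$, while the off-diagonal $W_{ij}$ are supported on the $(a,b)$ and $(b,a)$ blocks and carry the $U(\cdot)^{T}U^{\dg}$ structure coming from the Robertson part of the map. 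This makes the support pattern of every block of $\rho$ transparent and, in particular, shows that the vanishing condition ($0<|i-j|<2K\Rightarrow\rho_{ij}=\mathbb{O}_{d}$) is exactly what is needed so that, after grouping the outer indices by their residue $\mu$, the dominant part of $\rho$ decouples into $2K$ copies of the block form appearing in Theorem \ref{thm:twierdzenie}.

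For $\rho\ge0$ I would then split $\rho$ (up to the normalization $\frac{1}{2k+1}$, which is irrelevant to positivity) into its ``$W$-part'' --- the diagonal blocks $\frac{\one_{d}}{d}-(2K(N-1)-1)W_{ii}$ together with the off-diagonal blocks $-W_{i,i+2K\ell}$ --- and the ``perturbation'' built from the terms $\widetilde{e}_{ij}\propto z_{ij}e_{ij}$. The identity contribution $\one_{d}/d$ on the diagonal supplies the diagonal dominance, and I would verify that the $W$-part satisfies conditions (1) and (2) of Theorem \ref{thm:twierdzenie} (with block size $2K$ and an appropriate set of weights $\alpha$ read off from the construction), so that it is positive by that theorem. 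The small coefficient $\frac{1}{(2K)^{2}N(N-1)}$ multiplying the $\widetilde{e}_{ij}$ is what lets these perturbations be absorbed without spoiling positivity; concretely I would fold them into the off-diagonal coefficients and check that the bound $|z_{ij}|\le1$ is preserved, exactly as in the inductive step of Theorem \ref{thm:twierdzenie}.

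For $\rho^{\Gamma}\ge0$ I would apply the partial transposition on the second factor and observe that it acts on each $W_{ij}$ block. The crucial point is that the antisymmetric unitary $U$ governing $\Psi$ intertwines transposition with conjugation by $U$, so that transposing the Robertson-type blocks returns a matrix of the same structural form: the relations $M_{ij}M_{kj}^{\dg}=\alpha_{j}M_{ik}$ and $M_{ii}\le\alpha_{i}\one_{2K}$ are stable under this operation. Hence $\rho^{\Gamma}$ again falls under the hypotheses of Theorem \ref{thm:twierdzenie}, with possibly relabelled blocks and conjugated off-diagonal phases, and its positivity follows by the same argument.

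The hard part will be the bookkeeping that simultaneously controls $\rho$ and $\rho^{\Gamma}$: one must check that the single choice of diagonal shift $2K(N-1)-1$ and of the perturbation scale $\frac{1}{(2K)^{2}N(N-1)}$ makes \emph{both} matrices satisfy the theorem's conditions at once, and that the phases $z_{ij}$ entering the generic blocks $\widetilde{e}_{ij}$ remain admissible after transposition. A secondary subtlety is the mismatch between the naive distance $|i-j|$ and the genuine $N\times N$ block structure of $\m{N}(\m{2K})$; I expect to resolve it by the reordering that groups indices with equal $\mu$, which turns the vanishing and $-W$ patterns into the clean block form required by Theorem \ref{thm:twierdzenie}.
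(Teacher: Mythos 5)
The paper actually states this proposition without any proof (the ``One proves'' is followed by nothing), so your attempt cannot be matched against an argument of the authors; it has to stand on its own, and as written it has a genuine gap in each of its two central steps. First, the reduction to Theorem \ref{thm:twierdzenie} does not go through. That theorem requires diagonal blocks of the exact form $(1-\alpha_i)\one$ with $0\le\alpha_i\le1$ and $\sum_i\alpha_i=1$, together with the product relations $M_{ij}M_{kj}^{\dg}=\alpha_jM_{ik}$ and $M_{ii}\le\alpha_i\one$. The diagonal blocks $\rho_{ii}=\one_d/d-\big(2K(N-1)-1\big)W_{ii}$ are not multiples of the identity: they have the two eigenvalues $1/d$ and $1/\big((2K)^2N(N-1)\big)$. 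If you bound them below by the smaller eigenvalue so as to force the form $(1-\alpha_i)\one_d$, then $1-\alpha_i$ is the same tiny constant for all $d$ block indices and $\sum_i\alpha_i=1$ is impossible; and the rank-two blocks $W_{i,i+2K\ell}$ (operator norm $\frac{1}{(2K)^2N(N-1)}$) do not satisfy $M_{ij}M_{kj}^{\dg}=\alpha_jM_{ik}$ against any such weights. Second, the ``absorb the $\widetilde e_{ij}$ perturbation'' step fails quantitatively: each block row contains of order $(2K-1)(N-1)$ blocks $\widetilde e_{ij}$, each of norm $\frac{1}{(2K)^2N(N-1)}$, so their total is of order $1/d$, which exceeds the smallest diagonal eigenvalue $\frac{1}{(2K)^2N(N-1)}$ by a factor of roughly $2K(N-1)$. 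No norm-based or block-diagonal-dominance absorption can close that gap.

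What saves $\rho$ is not Theorem \ref{thm:twierdzenie} but the extreme sparsity of the individual matrix entries. Each $e_{ij}\ot\widetilde e_{ij}$ occupies a single entry of the $d^2\times d^2$ matrix, located at the position $(e_i\ot e_i,\,e_j\ot e_j)$, and each $e_{ij}\ot W_{ij}$ contributes one entry on the ``diagonal'' subspace ${\rm span}\{e_i\ot e_i\}$ plus a $U(\cdot)U^{\dg}$-rotated pair elsewhere; the relevant diagonal entries sitting over these positions are the \emph{large} eigenvalue $1/d$ of $\rho_{ii}$, not the small one. After a permutation, $\rho$ splits into a Gram-type matrix on ${\rm span}\{e_i\ot e_i\}$ --- which turns out to be diagonally dominant with \emph{equality} (row sums of off-diagonal moduli equal exactly $1/d$), so the construction is tight and any lossy eigenvalue bound destroys it --- together with a collection of $1\times1$ and $2\times2$ blocks whose positivity is the elementary check $|c|^2\le ab$ with the explicit constants $1/d$ and $\frac{1}{(2K)^2N(N-1)}$. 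The partial transpose is handled the same way: it relocates $e_{ij}\ot e_{ij}$ to $e_{ij}\ot e_{ji}$ and the verification again reduces to small principal minors, not to Theorem \ref{thm:twierdzenie}. To make your proof work you would need to carry out this entrywise bookkeeping for both $\rho$ and $\rho^{\Gamma}$ rather than invoke the theorem.
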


\begin{prop}
If  $|z_{ij}|=1$,
then a map $\Psi$ is indecomposible. \end{prop}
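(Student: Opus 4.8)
The plan is to prove indecomposability by exhibiting the PPT state $\rho$ constructed above as a witness: since $\rho$ is already shown (in the preceding proposition) to satisfy $\rho \geq 0$ and $\rho^\Gamma \geq 0$, it suffices to verify that $\tr(\ew{\Psi}\rho) < 0$ when $|z_{ij}|=1$. A decomposable witness cannot detect any PPT state, so finding a single PPT $\rho$ with $\tr(\ew{\Psi}\rho)<0$ immediately forces $\Psi$ to be indecomposable. Thus the whole argument reduces to a trace computation, and the role of the hypothesis $|z_{ij}|=1$ will be to make the off-diagonal contributions add up with a definite (negative) sign rather than partially cancelling.

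First I would write out $\tr(\ew{\Psi}\rho)$ using the block decompositions $\ew{\Psi} = \frac{1}{d}\sum_{i,j} e_{ij}\otimes W_{ij}$ and $\rho = \frac{1}{2k+1}\sum_{i,j} e_{ij}\otimes \rho_{ij}$. Because $\tr(e_{ij}\otimes W_{ij} \cdot e_{kl}\otimes \rho_{kl}) = \tr(e_{ij}e_{kl})\tr(W_{ij}\rho_{kl})$ and $\tr(e_{ij}e_{kl}) = \delta_{jk}\delta_{il}$, the double sum collapses to $\tr(\ew{\Psi}\rho) \propto \sum_{i,j}\tr(W_{ij}\rho_{ji})$. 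I would then split this sum according to the three-case definition of the blocks $\rho_{ij}$: the diagonal terms $i=j$ contribute $\sum_i \tr(W_{ii}\rho_{ii})$ with $\rho_{ii} = \frac{\one_d}{d} - (2K(N-1)-1)W_{ii}$; the near-diagonal terms with $0<|i-j|<2K$ vanish because $\rho_{ij}=\mathbb{O}_d$; the resonant terms with $|i-j|=2K\ell$ contribute $-\sum \tr(W_{ij}W_{ji})$, which is manifestly $\leq 0$ and is where the negativity is harvested; and the remaining off-diagonal terms contribute a correction proportional to $\sum z_{ij}\tr(W_{ij}e_{ji})$.

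The key step is to show that, at $|z_{ij}|=1$, the diagonal piece plus the correction piece is too small to compensate the strictly negative resonant contribution. Concretely I would (i) compute $\tr(W_{ii})$ and $\tr(W_{ii}^2)$ from the explicit action of $\Psi$ on $e_{ii}$, noting that $W_{ii}$ involves the identity $\one_{2K}$ on the complementary blocks and is therefore controlled; (ii) evaluate $\tr(W_{ij}W_{ji})$ for the resonant blocks, which should produce a positive quantity whose accumulated sum dominates; and (iii) check that the normalization factors $\frac{1}{2K(N-1)}$ in $\ew{\Psi}$ and the chosen coefficients $\frac{1}{(2K)^2 N(N-1)}$ in $\widetilde{e}_{ij}$ are precisely calibrated so that the trace becomes negative. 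The hypothesis $|z_{ij}|=1$ enters in step (iii): the correction term carries the factor $z_{ij}\cdot z^*_{ij} = |z_{ij}|^2 = 1$ when it meets the corresponding $z^*_{ij}$ hidden inside $W_{ij}$, so the sign and magnitude are pinned down rather than left ambiguous.

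The main obstacle I anticipate is the bookkeeping in step (ii)--(iii): one must track exactly which index pairs $(i,j)$ are ``resonant'' ($|i-j|=2K\ell$) versus generic, and confirm that the block structure of $W_{ij}=\Psi(e_{ij})$ interacts correctly with the antisymmetric unitary $U$ and with the $\red{2}{}$-type terms, so that $\tr(W_{ij}W_{ji})$ is genuinely positive and does not accidentally vanish. Getting the combinatorial count of surviving terms right, and verifying that the normalizations conspire to leave a net negative value (rather than merely a nonpositive one), is the delicate part; once the sign is secured, the conclusion that $\Psi$ is indecomposable follows immediately from Theorem~\ref{thm:nd-optimal} and the general principle that decomposable witnesses are nonnegative on all PPT states.
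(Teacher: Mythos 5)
Your proposal follows the paper's proof exactly: it uses the same PPT state $\rho$, reduces the claim to showing $\tr(\ew{\Psi}\rho)\propto\sum_{i,j}\tr(W_{ij}\rho_{ji})<0$, splits the sum into the same three cases dictated by the block structure of $\rho$, and closes with the standard fact that decomposable witnesses are nonnegative on PPT states (which is all that is needed here; Theorem~\ref{thm:nd-optimal} plays no role). One miscalibration in your expectations to be aware of when you carry out the bookkeeping: the diagonal-plus-resonant contribution $\sum_{|i-j|=2K\ell}\tr(W_{ij}\rho_{ji})=\frac{2(K-1)}{(2K)^{3}N(N-1)}$ turns out to be \emph{nonnegative}, and the decisive negative term $-\frac{2K-1}{(2K)^{3}N(N-1)}$ is supplied by the generic off-diagonal blocks $\widetilde{e}_{ij}$ (your ``correction'' piece), not by the resonant $-W_{ij}$ blocks, so the negativity is not ``harvested'' where you predict, though the total $-\frac{1}{(2K+1)(2K)^{3}N(N-1)}$ is indeed negative and the argument goes through.
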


\begin{proof}
We will show that $\tr\left(\ew{\Psi}\rho\right)=\sum_{i,j=1}^{d}\tr\left(W_{ij}\rho_{ji}\right)<0$. One has

\begin{equation}
\tr\left(\ew{\Psi}\rho\right)=\frac{1}{\mathcal{N}}\left\{ \sum_{|i-j|=2K\cdot\ell}\tr\left(W_{ij}\rho_{ji}\right)+\sum_{0<|i-j|<2K}\tr\left(W_{ij}\rho_{ji}\right)+\sum_{|i-j|>2K}\tr\left(W_{ij}\rho_{ji}\right)\right\} .\label{eq:slad_wykrywanie_PPT}
\end{equation}
The first sum consists of $N$ terms, the second sum has $2K-1$ terms, and the last one -- $(2K-1)(N-1)$ terms. Straightforward algebra leads to
\begin{eqnarray*}
  \sum_{|i-j|=  2K\cdot\ell}\tr\left(W_{ij}\rho_{ji}\right) &=& 
  \frac{2(K-1)}{(2K)^{3} N(N-1)}\ , \\
\sum_{0<|i-j|<2K}\tr\left( W_{ij}\rho_{ji}\right ) &=&  0, \\
  \sum_{|i-j|>2K}\tr\left(W_{ij}\rho_{ji}\right) &=& 
  - \frac{(2K-1)}{(2K)^{3} N(N-1)}\ ,
\end{eqnarray*}
and, as a consequence,

\begin{equation}
\tr\left(\ew{\Psi}\rho\right)= - \frac{1}{(2K+1)(2K)^{3}N(N-1)}<0,
\end{equation}
which completes to prove.
\end{proof}

\subsection{Optimality}

In a previous section we have shown that if $|z_{ij}|=1$ for all
$1\le i<j<d$, then the map $\Psi$ is non-decomposable. It turns
out that the same condition is necessary and sufficient for optimality.

\begin{prop}\label{prop:optimal}
$\Psi$ is an optimal map if and only if $|z_{ij}|=1$ for all $1\le i<j\le N$.
\end{prop}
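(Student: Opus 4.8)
The plan is to prove the two implications separately, using the two standard characterizations of optimality: (i) the sufficient spanning criterion recalled above from \cite{Lew} --- an EW is optimal once the product vectors $\bar\psi\ot\phi$ with $\langle\bar\psi\ot\phi|\ew{\Psi}|\bar\psi\ot\phi\rangle=0$ span the whole space --- and (ii) the defining property that $\Psi$ is optimal iff no nonzero completely positive map lies below it. Throughout I would use $\langle\bar\psi\ot\phi|\ew{\Psi}|\bar\psi\ot\phi\rangle=\frac1d\inner{\phi}{\Psi(\proj{\psi}{\psi})\phi}$, so that the relevant product vectors are exactly those $\phi$ annihilated by the positive-semidefinite matrix $\Psi(\proj{\psi}{\psi})$.

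For the direction $|z_{ij}|=1\ \Rightarrow$ optimal I would reuse the computation from the positivity proof above: for $\ket{\psi}=\bigoplus_i\sqrt{\alpha_i}\ket{\psi_i}$ the image $\Psi(P)$ is the positive block matrix with diagonal $(1-\alpha_i)\one_{2K}$ and off-diagonal blocks $-z_{ij}M_{ij}$. First I would locate its kernel. Choosing all $\psi_i$ equal to a single unit vector $v$, so that every $M_{ij}=\sqrt{\alpha_i\alpha_j}\,Q$ with $Q=\proj{v}{v}+U\proj{v^*}{v^*}U^{\dg}$ the rank-two projector onto $\mathrm{span}\{v,Uv^*\}$, a direct check shows that $\phi=\bigoplus_i\sqrt{\alpha_i}\,u$ is annihilated by $\Psi(P)$ for every $u\in\mathrm{span}\{v,Uv^*\}$; here $|z_{ij}|=1$ is exactly what makes the cancellation $(1-\alpha_i)-\sum_{j\ne i}\alpha_j=0$ work. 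This yields product vectors of the form $(a\ot\bar v)\ot(a\ot u)$ with $a=(\sqrt{\alpha_1},\dots,\sqrt{\alpha_N})$. Since the factors $a\ot a$ only generate the symmetric sector of $\C^N\ot\C^N$, I would then enlarge the family by taking $\psi$ supported on two blocks with distinct internal vectors, producing kernel vectors that also populate the antisymmetric block-sector. Finally I would count dimensions and verify that the collected product vectors span $\C^d\ot\C^d$, whence optimality follows from the spanning criterion. I expect this spanning verification --- in particular reaching the antisymmetric block-sector --- to be the delicate part of this direction.

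For the converse I would argue by contraposition: assuming $|z_{i_0 j_0}|=c<1$ for some pair, I would exhibit a nonzero CP map $\Phi_{CP}$ with $\Psi-\Phi_{CP}$ still positive, contradicting optimality. The natural candidate is $\Phi_{CP}(X)=VXV^{\dg}$ with a Kraus operator $V$ supported only on the blocks $i_0,j_0$, for instance $V=\ket{i_0}\bra{j_0}\ot W_1+\ket{j_0}\bra{i_0}\ot W_2$, chosen so that the $(i_0,j_0)$ block it produces is proportional to $B_{i_0 j_0}$; the slack created by $c<1$ in the block inequality of Lemma \ref{thm:Bathia} is what should allow a small multiple of such a $\Phi_{CP}$ to be subtracted while keeping $\Psi(P)-\Phi_{CP}(P)\ge0$ for every rank-one projector $P$. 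The main obstacle is precisely this last step: one must simultaneously match the $X_{j_0 i_0}+U X_{i_0 j_0}^{T}U^{\dg}$ structure of the off-diagonal block with a \emph{genuinely} completely positive $\Phi_{CP}$ (positive Choi matrix) and re-run the inductive Schur-complement estimate of Theorem \ref{thm:twierdzenie} to confirm that the residual map stays positive on all projectors for a small but nonzero subtraction. Checking that this construction necessarily fails when $c=1$, because the block inequality is then saturated and no slack remains, is what pins down the stated necessary and sufficient condition.
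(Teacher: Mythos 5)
Your overall strategy for the sufficiency direction --- the spanning criterion of Theorem 1 applied to product vectors $\bar\psi\ot\phi$ with $\phi\in\ker\Psi(\proj{\psi}{\psi})$ --- is the same as the paper's, but your concrete family of kernel vectors has a gap that is fatal for $N\ge 3$. Taking all internal vectors $\psi_i=v$ and $\phi=\bigoplus_i\sqrt{\alpha_i}\,u$, the $i$-th block of $\Psi(P)\phi$ is proportional to $\bigl[(1-\alpha_i)\sqrt{\alpha_i}-\sqrt{\alpha_i}\sum_{j\ne i}z_{ij}\alpha_j\bigr]u$, and this vanishes only if the \emph{phases} cancel, not merely the moduli: one needs (after rephasing the blocks of $\phi$) the cocycle condition $z_{ij}z_{jk}=z_{ik}$. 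For example, for $N=3$ and $\alpha_i=1/3$ the relevant kernel is nontrivial if and only if ${\rm Re}(z_{12}z_{23}z_{13}^{*})=1$, so for generic unimodular $z_{ij}$ your first family of kernel vectors simply does not exist; the identity ``$(1-\alpha_i)-\sum_{j\ne i}\alpha_j=0$'' that you invoke only sees absolute values. On top of this, the spanning verification --- which you yourself flag as the delicate part --- is left undone, so the sufficiency direction is not established. The paper avoids both problems by using product vectors supported on only \emph{two} coordinates $m,n$ of $\C^{d}$: it lists the $d^2$ vectors $e_k\ot e_k$, $\varphi_{mn}\ot\varphi_{mn}$ and $\phi_{mn}\ot\widetilde{\phi}_{mn}$ with $\varphi_{mn}=e_m+\e^{-\im\alpha_{mn}/2}e_n$, etc., so that only the single phase $\alpha_{mn}$ enters and is split symmetrically between the two components, where it cancels identically; linear independence of this explicit list is immediate, and the vanishing of the expectation values is a short direct computation.

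For the necessity direction the paper gives no proof either (it cites \cite{Justyna3}), so your attempt to make it explicit is welcome, but as sketched it does not go through. A single Kraus operator $V=\ket{i_0}\bra{j_0}\ot W_1+\ket{j_0}\bra{i_0}\ot W_2$ produces the off-diagonal block $W_1X_{j_0i_0}W_2^{\dg}$, whereas the block of $\Psi$ you need to shadow is $X_{i_0j_0}+UX_{j_0i_0}^{T}U^{\dg}$, which is not of the form $AX_{j_0i_0}B$ for any fixed $A,B$; matching it while keeping the Choi matrix of $\Phi_{CP}$ positive requires a sum of Kraus terms, and this is precisely the constraint you would have to verify. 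You would then still need to re-run the Schur-complement estimate of Theorem \ref{thm:twierdzenie} to show $\Psi-\epsilon\Phi_{CP}$ remains positive on all rank-one projectors for some $\epsilon>0$. None of this is carried out, so as it stands neither implication of the proposition is proved.
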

\begin{proof}
To show that $|z_{mn}|=1$ is a necessary condition for optimality
we may apply the same argument as in \cite{Justyna3}.
To show that $|z_{mn}|=1$, that is, $z_{mn} = \e^{\im\alpha_{mn}}$, [we introduced $\im$ as an imaginary unit]  is a sufficient condition it is enough
to consider a set of vectors $\Gamma_{\Psi}$ defined as follows

\begin{equation}\label{eq:wektory_do_optymalnosci}
\Gamma_{\Psi}=\{e_{k}\ot e_{k},\ \varphi_{mn}\ot\varphi_{mn},\ \phi_{mn}\ot\widetilde{\phi}_{mn}\ ; \ \text{ for }k=1,\dots,2N\text{ and }1\le m<n\le d\},
\end{equation}
where
\begin{equation}
\varphi_{mn}:=e_{m}+\text{e}^{-\imath\frac{\alpha_{mn}}{2}}e_{n}, \quad\phi_{mn}:=e_{m}+\imath\text{e}^{-\imath\frac{\alpha_{mn}}{2}}e_{n},
\quad\widetilde{\phi}{}_{mn}:=e_{m}-\imath\text{e}^{-\imath\frac{\alpha_{mn}}{2}}e_{n}
\end{equation}
 It is easy to check that elements of a set $\Gamma$ are linearly independent. Direct calculation shows that for each $\psi_{l}\ot\widetilde{\psi}_{l}\in\Gamma_{\Psi}$ the following holds:

\begin{equation}
\langle\psi_{l}\ot\widetilde{\psi}_{l}|\ew{\Psi}|\psi_{l}\ot\widetilde{\psi}_{l}\rangle=0\ ,
\end{equation}
which proves the theorem.  \end{proof}

\subsection{Nd-optimality}

\begin{prop}
$\Psi$ defines a family of nd-optimal maps.
\end{prop}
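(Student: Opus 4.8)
The plan is to reduce the claim to the optimality criteria already available. By Theorem \ref{thm:nd-optimal}, the witness $\ew{\Psi}$ is nd-optimal if and only if both $\ew{\Psi}$ and its partial transpose $\ew{\Psi}^\Gamma$ are optimal. Optimality of $\ew{\Psi}$ itself has just been settled in Proposition \ref{prop:optimal} under the standing assumption $|z_{ij}|=1$, so the entire burden of the argument falls on showing that $\ew{\Psi}^\Gamma$ is optimal as well. For this I would again invoke the spanning-product-vector optimality criterion (the first optimality theorem of the Introduction): it suffices to produce a family of product vectors $\psi\ot\phi$ with $\langle\psi\ot\phi|\ew{\Psi}^\Gamma|\psi\ot\phi\rangle=0$ that spans $\C^{d}\ot\C^{d}$, where $d=N\cdot 2K$.

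First I would translate the zero condition for $\ew{\Psi}^\Gamma$ into one for $\ew{\Psi}$ through the elementary identity $\langle\psi\ot\phi|\ew{\Psi}^\Gamma|\psi\ot\phi\rangle=\langle\psi\ot\phi^*|\ew{\Psi}|\psi\ot\phi^*\rangle$, where $\phi^*$ denotes complex conjugation in the fixed basis and $\Gamma$ is the partial transpose appearing in Theorem \ref{thm:nd-optimal}. Thus a product vector $\psi\ot\phi$ is a zero of the quadratic form of $\ew{\Psi}^\Gamma$ exactly when $\psi\ot\phi^*$ is a zero of $\ew{\Psi}$. This immediately produces candidate zeros of $\ew{\Psi}^\Gamma$ by conjugating the second factor of each element of the spanning family $\Gamma_\Psi$ used in Proposition \ref{prop:optimal}, namely $e_k\ot e_k$, $\varphi_{mn}\ot\varphi_{mn}^*$ and $\phi_{mn}\ot\widetilde{\phi}_{mn}^*$; the computation of Proposition \ref{prop:optimal} then shows each of these makes the $\ew{\Psi}^\Gamma$ quadratic form vanish.

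The hard part is the spanning property. Conjugating only one tensor factor is not a linear operation on $\C^{d}\ot\C^{d}$, so it need not carry a basis to a basis, and a direct check shows that it does not: within the two-dimensional block indexed by a pair $(m,n)$, the off-diagonal parts of $\varphi_{mn}\ot\varphi_{mn}^*$ and of $\phi_{mn}\ot\widetilde{\phi}_{mn}^*$ both reduce to scalar multiples of $\e^{\im\alpha_{mn}/2}\, e_m\ot e_n+\e^{-\im\alpha_{mn}/2}\, e_n\ot e_m$, so over $\C$ they collapse onto a single direction and the naively conjugated family fails to span. The resolution I would pursue is to enlarge the family with genuine additional zeros of $\ew{\Psi}^\Gamma$ recovering the missing off-diagonal direction $\e^{\im\alpha_{mn}/2}\, e_m\ot e_n-\e^{-\im\alpha_{mn}/2}\, e_n\ot e_m$ in each block; equivalently, to construct a fresh family of the same shape as $\Gamma_\Psi$ but with phases read off from the coefficients of $\ew{\Psi}^\Gamma$, and to rerun the linear-independence count confirming it yields $d^{2}$ independent vectors. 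Since $\ew{\Psi}^\Gamma$ is again a block witness obtained from $\Psi$ by composition with transposition, with identity diagonal blocks and off-diagonal phases of unit modulus, I expect the construction of Proposition \ref{prop:optimal} to apply to it essentially verbatim once these phases are identified. Verifying that $\ew{\Psi}^\Gamma$ indeed has this structure with the corresponding $|z'_{ij}|=1$, and carrying out the accompanying independence verification, is the main obstacle; the concluding step, deducing nd-optimality from Theorem \ref{thm:nd-optimal}, is then immediate.
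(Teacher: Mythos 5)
Your reduction via Theorem \ref{thm:nd-optimal} and Proposition \ref{prop:optimal} is the right frame, and your diagnosis of why the naive route fails is accurate: the identity $\langle\psi\ot\phi|\ew{\Psi}^\Gamma|\psi\ot\phi\rangle=\langle\psi\ot\phi^*|\ew{\Psi}|\psi\ot\phi^*\rangle$ does turn each element of $\Gamma_\Psi$ into a zero of $\ew{\Psi}^\Gamma$ after conjugating the second factor, and the resulting vectors really do collapse onto a single off-diagonal direction in each $(m,n)$ block, so they fail to span. But from that point on your argument is a plan, not a proof: you never exhibit the additional product-vector zeros of $\ew{\Psi}^\Gamma$, never identify the phases of $\ew{\Psi}^\Gamma$ from which your ``fresh family'' would be built, and never verify that the quadratic form of $\ew{\Psi}^\Gamma$ vanishes on it. You flag this yourself as ``the main obstacle,'' and it is exactly the content of the proposition; without it nothing has been established beyond the optimality of $\ew{\Psi}$ already known.

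The paper closes this gap with a single structural observation that bypasses all recomputation: setting $V:=\one_N\ot U^\dg$, one checks on the basis elements, using $U^T=-U$ and unitarity of $U$, that $V\Psi(e_{ij})V^\dg=\bigl(\Psi(e_{ij})\bigr)^T$, hence $(\one_d\ot V)\ew{\Psi}(\one_d\ot V^\dg)=\ew{\Psi}^\Gamma$. The partial transpose is thus implemented by a local unitary, so the vectors $\psi_l\ot V\widetilde{\psi}_l$ are product-vector zeros of $\ew{\Psi}^\Gamma$ and remain linearly independent because $\one_d\ot V$ is invertible; optimality of $\ew{\Psi}^\Gamma$ then follows at once from the spanning criterion, and nd-optimality from Theorem \ref{thm:nd-optimal}. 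If you want to salvage your route, this is the missing ingredient: the ``phases of $\ew{\Psi}^\Gamma$'' you propose to read off are precisely encoded in the conjugation by $U^\dg$ on the second tensor factor, and trying to guess them blockwise without this identity would force you to redo the entire zero-set computation of Proposition \ref{prop:optimal} for the transposed witness, whose off-diagonal blocks involve $U^*e_{rs}U^T$ and are not of the original form until the conjugation by $V$ is performed.
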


\begin{proof}
According to Theorem \ref{thm:nd-optimal} to show that $\ew{}$ is nd-optimal it is enough to show, that both $\ew{}$ and $\ew{}^{\Gamma}$ are optimal entanglement witnesses. In Proposition \ref{prop:optimal} we have proved that $\ew{\Psi}$ is optimal. Now we will show that $\ew{\Psi}^{\Gamma}$ is an optimal EW as well. Let us consider the following transformation:

\begin{equation}
(\one_{d}\ot V)\ew{\Psi}(\one_{d}\ot V^{\dg})=\sum_{i,j=1}^{d}e_{ij}\ot V\Psi(e_{ij})V^{\dg},
\end{equation}
where $V:=\one_{N}\ot U^{\dg}$. The action of $\Psi$ on basis elements $\{e_{ij}\}_{i,j=1}^{d}$
is given by

\begin{equation}
\Psi(e_{ij})=\frac{1}{2K\cdot(N-1)}\begin{cases}
(\one_{N}-e_{pp}^{(N)})\ot\one_{2K}\tr\left(e_{rs}^{(2K)}\right), & \mbox{\text{ if }}q=p\\
-z_{ij}\left[e_{pq}^{(N)}\ot e_{rs}^{(2K)}+e_{qp}^{(N)}\ot U\left(e_{rs}^{(2K)}\right)^{T}U^{\dg}\right], & \mbox{\text{ if }}q\neq p
\end{cases}\label{eq:psi_na_bazowych}
\end{equation}
where we introduced  vectors $e_{i},e_{j}\in\C^{d} = \mathbb{C}^N \ot \mathbb{C}^{2K}$ {\em via} the following rule
\[
\begin{cases}
e_{i}= & e_{p}^{(N)}\ot\, e_{r}^{(2K)}\\
e_{j}= & e_{q}^{(N)}\ot\, e_{s}^{(2K)}
\end{cases}\ ,
\]
that is, each $i \in \{1,\ldots,d=N\cdot2K\}$ defines a pair $(p,r)$ with $p\in \{1,\ldots,N\}$ and $r\in \{1,\ldots,2K\}$. It is easy to check, that:

\begin{equation}
V\Psi(e_{ij})V^{\dg}=\frac{1}{2K\cdot(N-1)}(\one_{N}-e_{pp}^{(N)})\otimes\one_{2K}\tr\left(e_{rs}^{(2K)}\right),
\end{equation}
for $q=p$ and

\begin{equation}
V\Psi(e_{ij})V^{\dg} = \frac{-z_{ij}}{2K\cdot(N-1)}\left[e_{qp}^{(N)}\otimes\left(U^{\dg}e_{rs}^{(2K)}U\right)^{T}+e_{pq}^{(N)}\otimes e_{rs}^{(2K)}\right]^{T}
\end{equation}
otherwise. Since $U$ is an unitary and antisymmetric matrix, one has
\begin{equation}
\left(U^{\dg}e_{rs}U\right)^{T}=U^{T}e_{rs}^{T}{U}^*=-Ue_{rs}^{T}{U}^* =Ue_{rs}^{T}U^{\dg}
\end{equation}
 and, as a consequence, one gets $V\Psi(e_{ij})V^{\dg}=\left(\Psi(e_{ij})\right)^{T}$. This shows that

\begin{equation}
(\one_{d}\ot V)\ew{}(\one_{d}\ot V^{\dg})=\ew{}^{\Gamma}.
\end{equation}
Since for all $\psi_{l}\otimes\widetilde{\psi}_{l}\in\Gamma_{\Psi}$, where $\Gamma_{\Psi}$ is defined as (\ref{eq:wektory_do_optymalnosci}), one has

\begin{equation}
\langle\psi_{l}\otimes V^{\dg}\widetilde{\psi}_{l}|\ew{\Psi}^{\Gamma}|V\psi_{l}\otimes\widetilde{\psi}_{l}\rangle=\langle\psi_{l}\otimes\widetilde{\psi}_{l}|\ew{\Psi}|\psi_{l}\otimes\widetilde{\psi}_{l}\rangle=0.
\end{equation}
Direct calculations shows that vectors $\{V\psi_{l}\otimes\widetilde{\psi}_{l};l=1,\dots,d^{2}\}$ are linearly independent. This completes the proof.
\end{proof}

\section{Conclusions}

We provided a new tool which may be used to construct new examples of positive maps (entanglement witnesses) in finite dimensional matrix algebras.  Interestingly, it allows to present a universal proof of positivity  of several well known maps (reduction map, generalized reduction, Robertson map and many others). Finally, it is shown that our method enables one to construct a new family of linear maps and prove that they are positive, indecomposable and even optimal. It should be stressed that this constructions provides linear maps $\Psi : \mathbb{M}_d \rightarrow \mathbb{M}_d$ only for $d = 2N$. It would be interesting to find an analogous method if $d$ is odd. 
For example it would be desirable to provide an appropriate construction generalizing  well known Choi map $\Psi_{\rm Choi} : \mathbb{M}_3 \rightarrow \mathbb{M}_3$ which was proved to be indecomposable and extremal. In a forthcoming paper we plan to report recent progress in this direction.





\bibliographystyle{plain}

\begin{thebibliography}{28}


\bibitem{Paulsen} V. Paulsen, {\it Completely Bounded Maps and Operator
Algebras}, Cambridge University Press, 2003.

\bibitem{Bhatia} R. Bhatia, {\em Positive Definite Matrices},
(Princeton University Press, 2006).

\bibitem{HHHH} R. Horodecki, P. Horodecki, M. Horodecki and K. Horodecki,
Rev. Mod. Phys. \textbf{81}, 865 (2009).

\bibitem{Guhne} O. G\"uhne and G. T\'oth, Phys. Rep.  {\bf 474}, 1
(2009).

\bibitem{EW1} M. Horodecki, P. Horodecki and R. Horodecki, Phys.
Lett. A \textbf{223}, 1 (1996).

\bibitem{CJ} A. Jamio{\l}kowski, Rep. Math. Phys. {\bf 3}, 275 (1972).

%
%


\bibitem{EW2} B.M. Terhal, Phys. Lett. A \textbf{271}, 319 (2000).

\bibitem{Lew} M. Lewenstein, B. Kraus, J. I. Cirac, and P. Horodecki,
Phys. Rev. A \textbf{62}, 052310 (2000).







\bibitem{P1}  E. St{\o}rmer, in Lecture Notes in Physics
{\bf 29}, Springer Verlag, Berlin, 1974, pp. 85-106; Acta Math. {\bf 110}, 233
(1963);  Proc. Am. Math. Soc. {\bf 86}, 402 (1982).



\bibitem{Choi} M.-D. Choi, J. Operator Theory, {\bf 4}, 271
(1980).

\bibitem{Choi-Lam} M.-D. Choi and T.-T. Lam, Math. Ann. {\bf 231}, 1 (1977).

\bibitem{Wor} S. L. Woronowicz, Rep. Math. Phys. {\bf 10}, 165 (1976); Comm. Math. Phys. {\bf 51}, 243 (1976).


\bibitem{Robertson} A.G. Robertson, J. London Math. Soc. (2) \textbf{32},
133 (1985).

\bibitem{Cho-Kye} S. J. Cho, S.-H. Kye, and S. G. Lee, Linear Algebr. Appl. {\bf 171},
213 (1992).







\bibitem{O1} A.C. Doherty, P.A. Parrilo and F.M. Spedalieri, Phys.
Rev. Lett. {\bf 88}, 187904 (2002).


\bibitem{O3} J. Eisert, P. Hyllus, O. G\"uhne, and M. Curty, Phys. Rev.
A {\bf 70}, 062317 (2004).







\bibitem{koss} A. Kossakowski,  Open Sys. Inf. Dyn. {\bf 10}, 1 (2003).


\bibitem{Ha4} K.-C. Ha and S.-H. Kye,  Phys. Lett. A {\bf 325}, 315
(2004).

\bibitem{Breuer} H.-P. Breuer, Phys. Rev. Lett. \textbf{97}, 0805001
(2006).

\bibitem{Hall} W. Hall, J. Phys. A: Math. Gen. \textbf{39}, (2006)
14119.


\bibitem{Clarisse} L. Clarisse, PhD thesis (University of York), available as quant-ph/0612072.


\bibitem{S} E. St{\o}rmer, J. Funct. Anal. 254, 2303 (2008).

\bibitem{JPA-1} D. Chru\'{s}ci\'{n}ski and A. Kossakowski, J. Phys.
A: Math. Theor. \textbf{41}, 145301 (2008); {\em ibidem}  \textbf{41}, 215201 (2008).



\bibitem{Justyna1} D. Chru\'{s}ci\'{n}ski, J. Pytel and G. Sarbicki,
Phys. Rev. A \textbf{80} (2009) 062314.


\bibitem{Justyna2} D. Chru\'sci\'nski and J. Pytel,
Phys. Rev. A {\bf 82} 052310 (2010).

\bibitem{Justyna3} D. Chru\'sci\'nski and J. Pytel, J. Phys. A: Math. Theor. {\bf 44},  165304 (2011).




\bibitem{Skowronek-1} {\L}. Skowronek and K. \.Zyczkowski, J. Phys. A: Math. Theor. {\bf 42}, 325302
(2009).


\bibitem{Skowronek-2} {\L}. Skowronek, E. St{\o}rmer, and K. \.Zyczkowski, J. Math. Phys. {\bf 50}, 062106 (2009)



\bibitem{Vogel} J. Sperling, W. Vogel, Phys. Rev. A {\bf 79}, 022318 (2009).


\bibitem{CMP} D. Chru\'sci\'nski and A. Kossakowski, Comm. Math. Phys. {\bf 290}, 1051 (2009).

\bibitem{kule} D. Chru\'sci\'nski and A. Kossakowski, Phys. Lett. A
{\bf 373}, 2301 (2009).

\bibitem{Gniewko} D. Chru\'{s}ci\'{n}ski, A. Kossakowski, and
G. Sarbicki, Phys. Rev A \textbf{80}, 042314 (2009).






\bibitem{FilipI} D. Chru\'sci\'nski and F. A. Wudarski, Open Syst. Inf. Dyn. {\bf 18},  387 (2011)

\bibitem{Kye-PRA} K-C. Ha and S-H. Kye, Phys. Rev. A {\bf 84}, 024302 (2011).

\bibitem{kye-exposed} K.-C. Ha and S.-H. Kye, Open Sys.  Inf. Dyn. {\bf 18}, 323 (2011).

\bibitem{Gniewko-2012} D. Chru\'sci\'nski and G. Sarbicki, J. Phys. A: Math. Theor. {\bf 45}, 115304 (2012).

\bibitem{Majewski} W. A. Majewski, J. Math. Phys. {\bf 53}, 023515 (2012).

\bibitem{PO} S.-H. Kye, {\em  Facial structures for various notions of positivity and
applications to the theory of entanglement}, arXiv:1202.4255.



\end{thebibliography}

\end{document}